\pdfoutput=1

\documentclass[11pt]{amsart}
\setcounter{tocdepth}{3}
\usepackage{hyperref}
\usepackage{dsfont}
\usepackage[all]{xy}
\usepackage{graphics}
\usepackage{epsfig}
\usepackage{amsmath}
\usepackage{amscd}
\usepackage{pdfsync}
\usepackage{comment}
\usepackage{todonotes}
\usepackage{mdwlist}
\allowdisplaybreaks
\date\today

\usepackage{float}
 
\usepackage[shortlabels]{enumitem}
\usepackage{epsfig}
\usepackage{amsmath}
\usepackage{amscd}
\usepackage{tikz-cd}
\usetikzlibrary{decorations.pathmorphing}
\usepackage{verbatim}
\usepackage{pdfpages}
\usepackage{amsfonts,latexsym,amssymb}
\usepackage{parskip}
\usepackage{cancel}

\usepackage{pgfplots} 
\usepgfplotslibrary{statistics}

\usepackage{tikz,tikz-3dplot}

\usepackage{amsfonts,latexsym,amssymb}


\newcommand{\bbL}{{{\mathbb{L}}}}
\newcommand{\bbE}{{{\mathbb{E}}}}
\newcommand{\bbZ}{{{\mathbb{Z}}}}

\newcommand{\bbR}{{{\mathbb{R}}}}

\newcommand{\bbV}{{{\mathbb{V}}}}

\newcommand{\cA}{{\mathcal{A}}}

\newcommand{\cD}{{\mathcal{D}}}

\newcommand{\cG}{{\mathcal{G}}}

\newcommand{\cJ}{{\mathcal{J}}}

\newcommand{\cL}{{\mathcal{L}}}
\newcommand{\cM}{{\mathcal{M}}}

\newcommand{\cP}{{\mathcal{P}}}

\newcommand{\cT}{{\mathcal{T}}}

\newcommand{\be}{\mathbf e}

\newcommand{\bt}{\mathbf t}  
\newcommand{\bu}{\mathbf u}

\newcommand{\bA}{\mathbf A}

\newcommand{\bG}{\mathbf G}

\newcommand{\bM}{\mathbf M}

\newcommand{\bS}{\mathbf S}

\newcommand{\tC}{{\tilde{C}}}
\newcommand{\tP}{{\tilde{P}}}

\newcommand{\tZ}{{\tilde{Z}}}

\newcommand{\tLambda}{{\tilde{\Lambda}}}
\newcommand{\tdt}{{\widetilde{dt}}}
\newcommand{\tGamma}{{\tilde{\Gamma}}}

\newcommand{\vu}{\vec{u}}

\newcommand{\hchi}{{\hat{\chi}}}
\newcommand{\hphi}{{\hat{\phi}}}

\newcommand{\teta}{{\tilde{\eta}}}

\newcommand{\tgamma}{{\tilde{\gamma}}}
\newcommand{\tsum}{{\widetilde{\sum}}}

\newcommand{\sfv}{{\mathsf{v}}}
\newcommand{\sfb}{{\mathsf{b}}}
\newcommand{\sfe}{{\mathsf{e}}}

\newcommand{\sfP}{{\mathsf{P}}}
\newcommand{\tsfP}{{\tilde{\mathsf{P}}}}

\newtheorem{proposition}{Proposition}[section]

\newtheorem{lemma}[proposition]{Lemma}

\newtheorem{corollary}[proposition]{Corollary}

\theoremstyle{definition}

\newtheorem{remark}[proposition]{Remark}


\begin{document}

\title[Character varieties and $SU(2)$ Lattice Gauge Theory I]{Symplectic Geometry of character varieties and $SU(2)$ Lattice Gauge Theory I}

\author[T. R. Ramadas]{T. R. Ramadas}
\address{Chennai Mathematical Institute\\
H1 Sipcot IT Park\\  
Siruseri, TN 603103\\
India} 
\email{ramadas@cmi.ac.in}

\date{\today}

\thanks{I thank Mahan Mj for tutorials on pants decompositions and K.N. Raghavan for help with Schur-Weyl duality.  Thanks also to S. Gupta, A. Ladha and  P.K. Mitter for valuable feedback regarding this manuscript. This work is partially supported by the Infosys Foundation and (in its initial stages) by the Department of Science and Technology, {\it via} a J.C. Bose Fellowship.}

\begin{abstract} Given a finite connected graph $\Lambda$, the space of $SU(2)$ lattice gauge-fields on $\Lambda$, modulo gauge transformations, is a Lagrangian submanifold -- with mild singularities -- of the $SU(2)$ character variety (= phase-space of Chern-Simons theory) of an associated surface. We present evidence that, in the limit of large $\Lambda$, integration over the character variety with respect to the Liouville measure approximates lattice-theoretic integrals. By the works of W. Goldman, L. Jeffrey and J. Weitsman, the formalism of Duistermaat-Heckman applies  to the relevant integrals over the character variety. A continuous version of the Verlinde algebra facilitates computations. In two dimensions we recover standard expressions. For the theory on a 3-dimensional periodic lattice $\Lambda$ with Migdal action we get a very pleasant expression for the symplectic partition function, and with the Wilson action a more elaborate one. Each is a sum of a series with positive terms. One can also write down expressions for plaquette-plaquette correlations and 't Hooft loops.
\end{abstract}

\maketitle

\pagebreak

\section{Introduction and Summary}

In the diagrams below we use a simple example to visualise the definitions; we deal with more interesting cases later. 

\begin{enumerate}[wide, labelwidth=!, labelindent=0pt,label=(\alph*)]

\item Let $\Lambda$ be a finite connected graph (which we will refer to as a lattice), with $\sfv$ vertices and $\sfe$ edges. Its first Betti number is $\sfb = \sfe-\sfv+1$. We will suppose that $\sfb \ge 2$.

\begin{figure}[H]
\includegraphics[scale=0.5,trim=20 20 20 20]{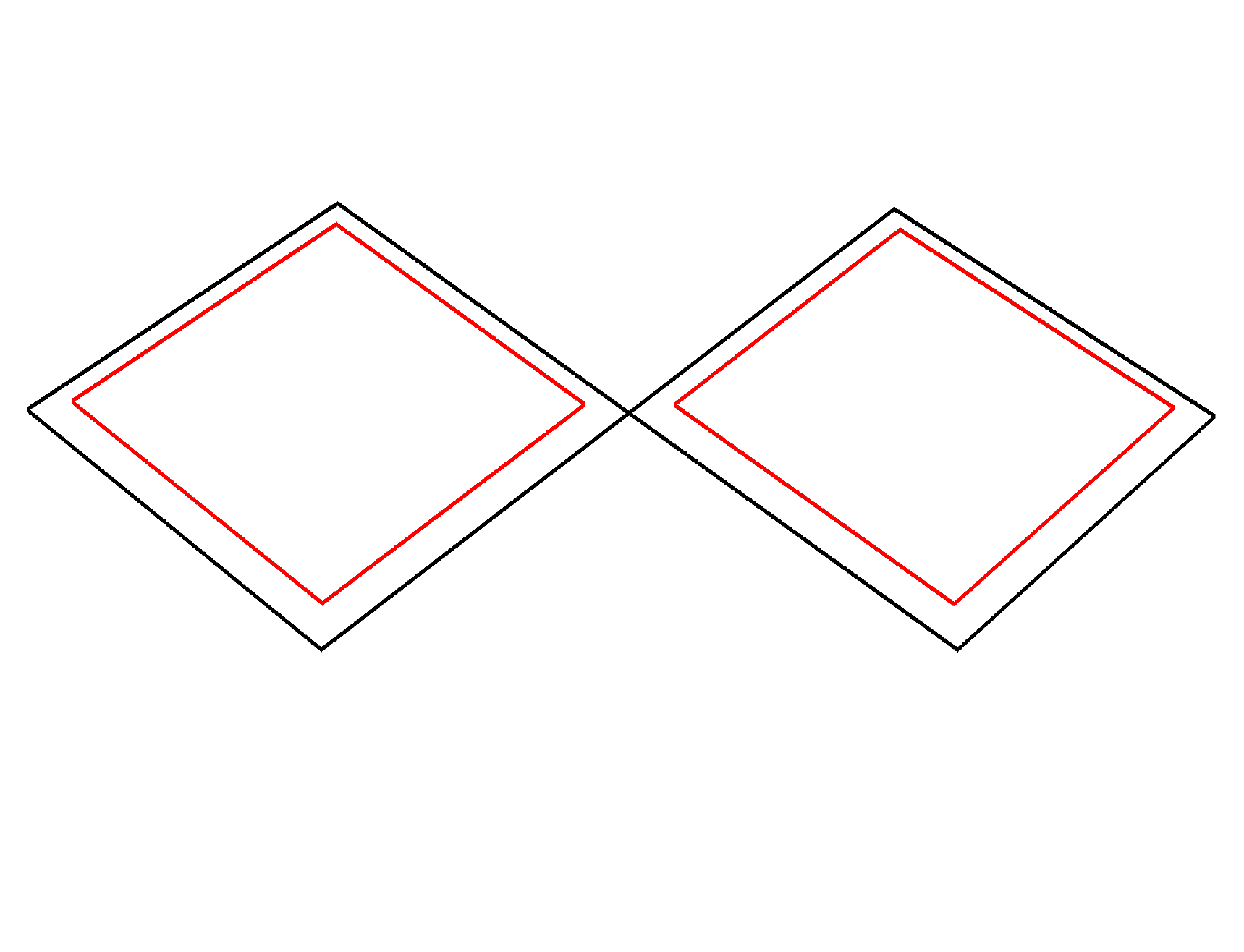}
\caption{A connected graph with b=2. $\sfP$ is the set of two red loops, for clarity drawn slightly displaced from the graph itself.}\label{GraphForSummary}
\end{figure}

We let $\bA$ denote the space of $SU(2)$ lattice gauge fields on $\Lambda$, and $\bG$ the group of lattice gauge transformations\footnote{These terms are explained in \S \ref{lattice}.}. The central objects are integrals of the form
\begin{equation}\label{latticeintegral}
\int_\bA \exp{\{\sum_{\gamma \in \sfP} \beta_\gamma \phi_\gamma (g)\}} \ F(g) \cD g
\end{equation}
Here $\cD g$ is the Haar measure on $\bA$ normalised so that $\int \cD g=1$,  $\sfP$ is a set of closed paths  in $\Lambda$, $\beta_\gamma$ are real numbers, $\phi$ is a class function on $SU(2)$, $\phi_\gamma(g)=\phi(H_\gamma(g))$ where $H_\gamma(g)$ is the lattice holonomy (of the lattice gauge field configuration $g \in \bA$) around $\gamma$, and lastly, $F$ is a gauge-invariant function. (Note that for fixed $\phi$ and $\gamma$, $\phi_\gamma$ is one such.)

We let $\bM$ denote the quotient $\bA/\bG$.

\item \emph{``Pumping up": graphs and handlebodies.}  One can associate to $\Lambda$ a handlebody $M_\Lambda$ as follows: embed $\Lambda$ in a(ny) three-manifold $M^3$ and take $M_\Lambda$ to be a closed regular neighbourhood. (Informally, a 3-dimensional handlebody is a 3-manifold -- with boundary -- which is diffeomorphic to a closed 3-ball with a finite number of three-dimensional cylinders attached to it.) The graph $\Lambda$ sits naturally inside the handlebody $M_\Lambda$ as a \emph{spine}. The boundary of the handlebody is a closed orientable surface $\bS = \bS_\Lambda$ without boundary.  There is a retraction $r:M_\Lambda \to \Lambda$.

\begin{remark}\label{Heegaard} The three-manifold $M^3$ could be $\bbR^3$. If, on the other hand, $M^3$ is compact, the closure $M'_\Lambda$ of the complement of $M_\Lambda$ is a compact 3-manifold sharing $\bS$ as a common boundary with $M_\Lambda$. (If $M'_\Lambda$ is also a handlebody, we have a \emph{Heegaard splitting} of $M^3$.)
\end{remark}

\begin{figure}[H]
\includegraphics[scale=0.5,trim=20 20 20 20,clip]{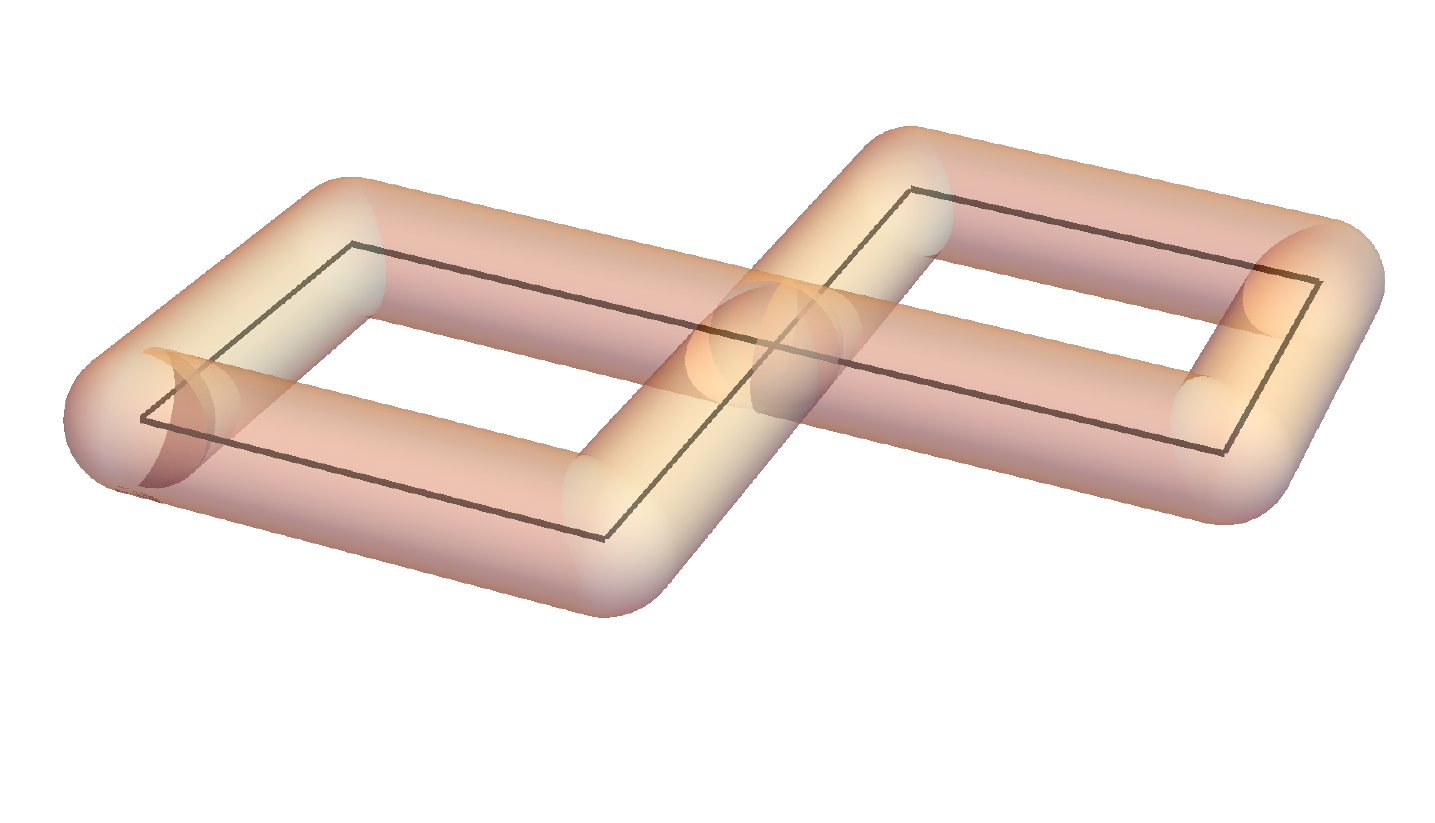}
\caption{The handlebody, spine (in black) and boundary surface (of genus 2).}\label{LatticeOfTubesForSummary}
\end{figure}

The first Betti number of the surface $\bS$ is $2(\sfe-\sfv+1)$.  (This follows, for example, by a simple Mayer-Vietoris argument.)
The genus\footnote{This is \emph{not} the genus of the graph $\Lambda$.} of $\bS_\Lambda$ is therefore $\sfb = \sfe-\sfv+1$.

\item Let $\cM \equiv \cA_f/\cG$ where $\cA_f$ is the space of flat $SU(2)$ gauge-fields (in differential-geometric terms, flat connections on the trivial $SU(2)$ bundle) on $\bS$, and $\cG$ is the group of gauge-transformations. This is a manifold -- outside some mild singularities, which we ignore  -- of dimension $6(\sfb-1)$.

The manifold $\cM$ carries a symplectic structure outside its singularities and the corresponding Liouville measure $d \mu_\cL$. We will assume that the symplectic structure is normalised so that the total volume is one. (The symplectic structure\footnote{Up to normalisation, this is the symplectic structure that $\cM$ inherits as the phase-space of $SU(2)$ Chern-Simons theory.} depends on a choice of orientation on $\bS$, which we will assume made. The measure-theoretic statements below are independent of this choice.)

The space $\bM$ can be canonically identified with the moduli space of flat connections on $\bS$ that extend (as a flat connection) to the interior of $M_\Lambda$. In fact, $\bM \subset \cM$ is a Lagrangian submanifold. In particular the dimension of $\bM$ (which, like $\cM$, has mild singularities) is $3(\sfe-\sfv)=3(\sfb-1)$.

\item The space $\cM$ is the space of isomorphism classes of representations of $\pi_1(\bS)$ in $SU(2)$, and thus a real affine algebraic variety -- in this avatar known as a \emph{character variety} -- and $\bM \hookrightarrow \cM$ a subvariety. A natural class of algebraic functions on $\cM$ (respectively, $\bM$) consists of the \emph{Wilson loop functions} $W_\tgamma$ (respectively, $W_\gamma$), namely, traces of the holonomy around closed  loops $\tgamma$ in $\bS$ (respectively, $\gamma$ in $\Lambda$). These generate the algebra of all algebraic functions \cite{CharlesMarche}. Note that if $\tgamma$ is a loop in $\bS$ and $\gamma=r \circ \tgamma$ its image under the retraction $r:\bS \to \Lambda$,
\begin{equation}\label{restrict}
W_\tgamma|_\bM=W_\gamma
\end{equation}

\item Since $\bM$ is a Lagrangian submanifold of $\cM$, it does not inherit a measure naturally from $d\mu_\cL$. We shall present evidence that nonetheless, 

$\#$ \textbf{for large graphs $\Lambda$, integrals over $\bM$ w.r.to the Haar measure $\cD g$ of (products of functions) $W_\gamma$  can be approximated by integrals of the corresponding   (products of functions) $W_\tgamma$ over $\cM$ w.r.to the Liouville measure $d\mu_\cL$, normalised so that the total volume of $\cM$ is 1.}

\item\label{DH} The advantage of working with the measure $d\mu_\cL$  is that we can appeal to the  following circle of results, due primarily to W. Goldman, L. Jeffrey and J. Weitsman (\cite{Goldman},\cite{JeffreyWeitsman}):

\emph{Consider a pants decomposition of $\bS$ with circles $\tGamma \equiv \{\tgamma_1,\dots,\tgamma_{3(\sfb-1)}\}$ bounding the pants. Let $W_{\tgamma_i}$ denote the corresponding Wilson loop functions, and define functions $t_i:\cM \to [0,1)$ by $W_{\tgamma_i}=2\cos(\pi t_i)$. Let $\bt:\cM \to [0,1]^{3(\sfb-1)}$ be the map with components $t_i$. Let $\cT_\tGamma$ be the corresponding trinion graph. We shall suppose that $\cT_\tGamma$ has no loops and cannot be disconnected by removing one edge. Then the closure of the image of $\cM$ by $\bt$ is the polytope $\cP_{\cT_\tGamma}$ defined by the conditions:
\begin{itemize}
\item For each pair of pants in the decomposition, (i.e., each vertex of ${\cT_\tGamma}$) the components $(t_{i_1},t_{i_2},t_{i_3})$ of $\bt$ corresponding to the three boundary circles (i.e., the edges meeting at the vertex) obey the spherical triangle inequalities (see below).
\end{itemize}
Further, the push-forward of the Liouville measure by $\bt$ is the Lebesgue measure times the indicator function of the polytope $\cP_{\cT_\tGamma}$, normalised so that the its integral is 1.}

(This is closely related to the Duistermaat-Heckman Theorem.)

\item Returning to the lattice theory, suppose given a family $\sfP$ of loops in the graph $\Lambda$, for example as in the integral (\ref{latticeintegral}). Suppose there exists a judiciously chosen set $\tGamma$ of $3(\sfb-1)$ closed loops in $\bS$ that give a pants decomposition, and a subset $\tsfP$ of $\tGamma$ such that
\[
\tgamma_i \mapsto \gamma_i=r \circ \tgamma_i
\]
is a bijection $\tsfP \to \sfP$. Then we can consider the map $\bt_\bM:\bM \to [0,1]^{3(\sfb-1)}$ defined as above (but using the functions $W_{\gamma_i}$ rather than $W_{\tgamma_i}$). Clearly the closure of the image of $\bt_\bM$ is contained in the polytope $\cP_{\cT_\tGamma} \subset [0,1]^\tGamma$. (In the most relevant cases, the image coincides with the polytope.)

In our running example, this is illustrated here:

\begin{figure}[H]
\includegraphics[scale=0.7,trim=20 80 20 20,clip]{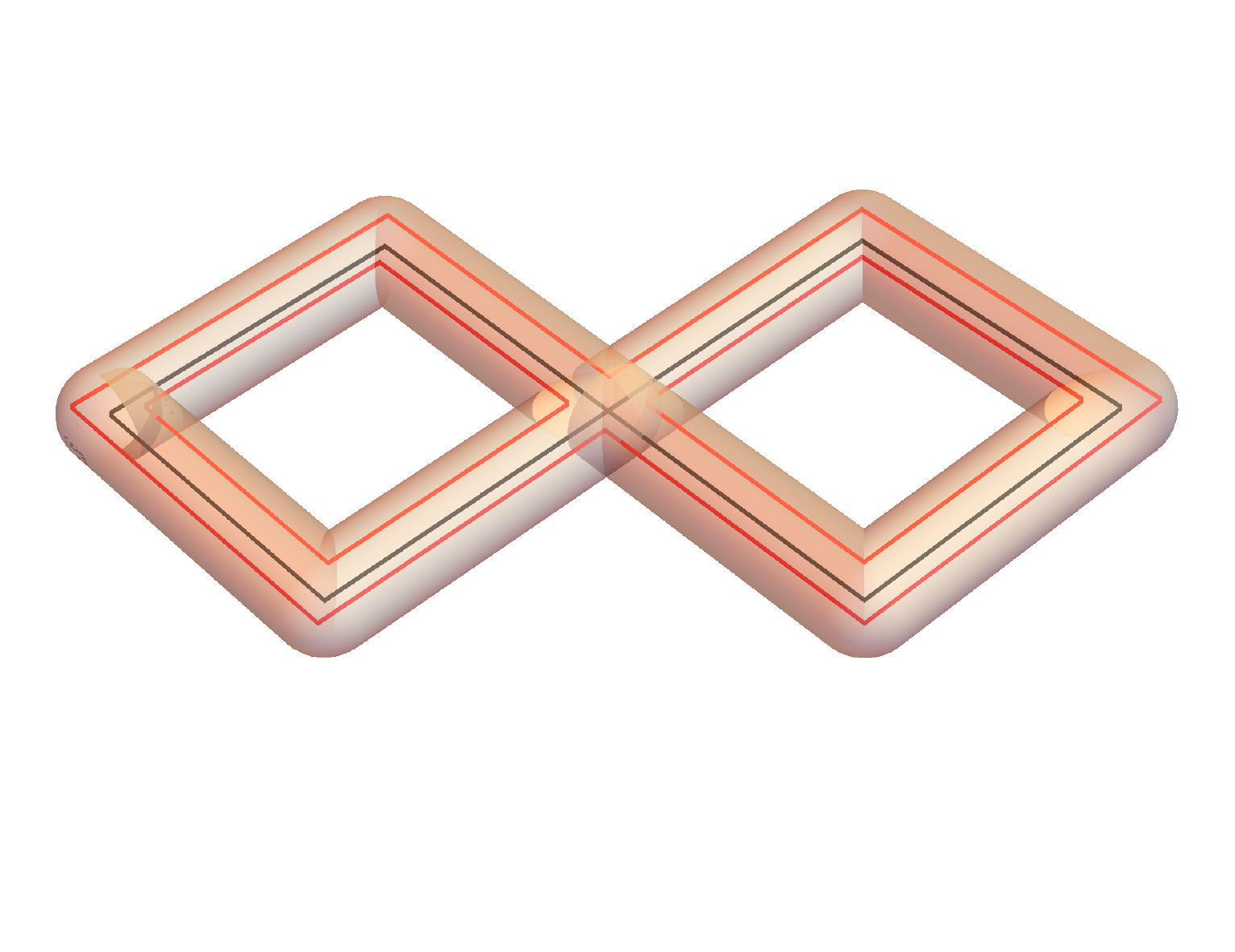}
\caption{Three closed loops (red) forming a pants decomposition $\tGamma$ of $\bS_\Lambda$. $\tsfP$ is the set of two smaller red loops. The spine is shown in blue.}\label{Tubes3DPlaquettesForSummary}
\end{figure}

...and here:

\begin{figure}[H]
\includegraphics[scale=0.5,trim=25 25 25 25]{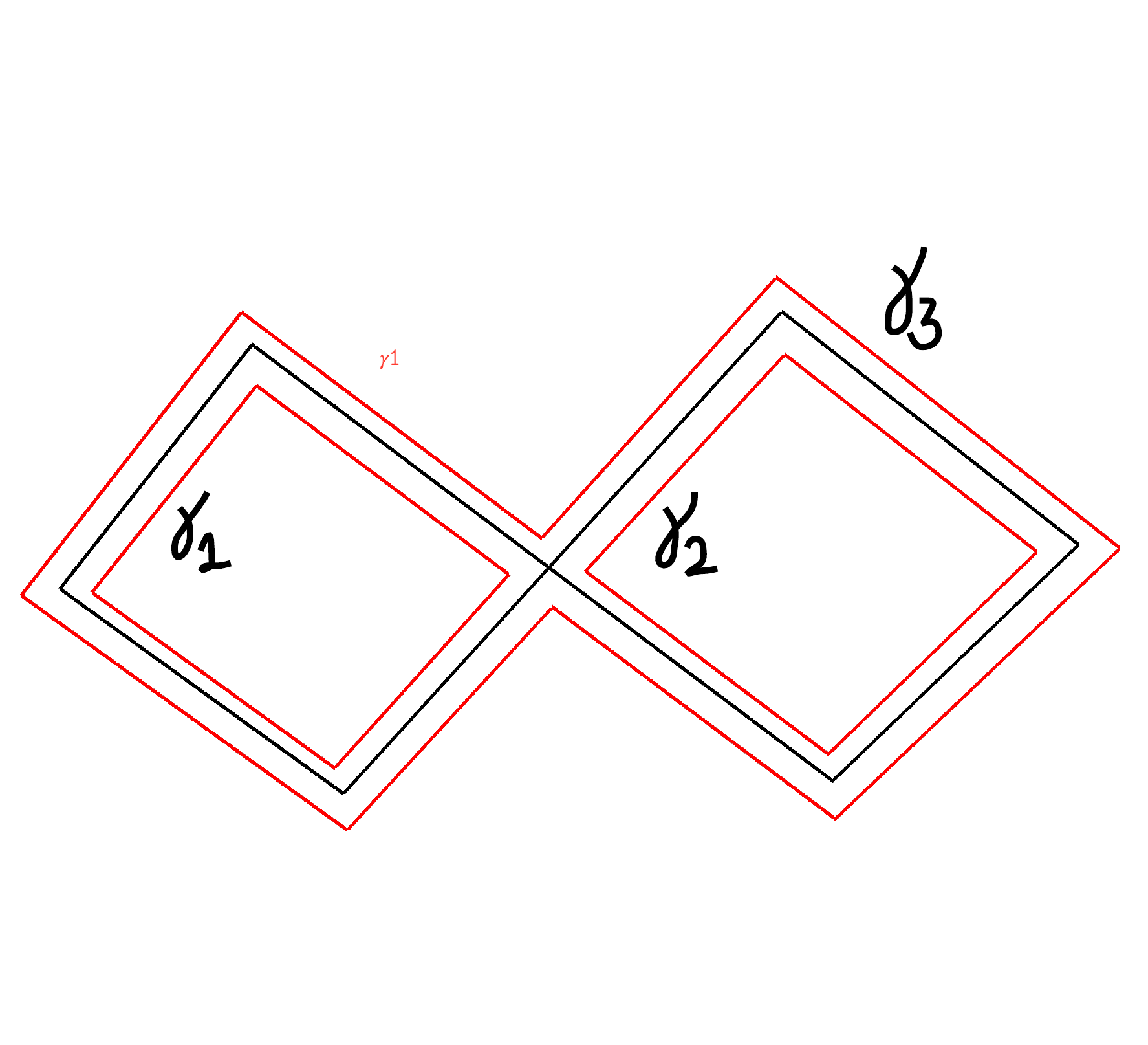}
\caption{Retractions of the three red loops to the spine, for clarity drawn slightly displaced from the graph itself. $\sfP$ is the set of two smaller red loops.}\label{GraphForSummaryWithPlaquettes}
\end{figure}

\item Let $d\mu_0$ denote the push-forward by $\bt_\bM$ of $\cD g$. By definition,
\begin{equation*}
\int \exp{\{\sum_{\gamma \in \sfP} \beta_\gamma \phi_\gamma (g)\}} \ \cD g=
\int \exp{\{\sum_{\gamma \in \sfP} \beta_\gamma \hphi (t_\gamma)\}} \ d\mu_0
\end{equation*}
where $\hphi$ is defined such that $\phi(\mathtt{g})=\hphi(t)$, with $trace(\mathtt{g})=2\cos{\pi t}$.

Of course, for this to be useful, one needs information on the image of the map $\bt_\bM$, as well as $d\mu_0$.

If $\#$ holds, it follows that in the limit of large $\Lambda$, the direct image measure $d\mu_0$ (corresponding to the expanded set $\tGamma$) tends (up to a nonzero constant) to the Lebesgue measure $d\bt \equiv \prod_{\gamma \in \tGamma} dt_\gamma$ restricted to $\cP_{\cT_\tGamma}$. In particular the integral in the expression (\ref{latticeintegral}), when $F=1$, can be meaningfully compared to:
\begin{equation}\label{trinionintegral}
\int_{\cP_{\cT_\tGamma}} \exp{\{\sum_{\gamma \in \tsfP} \beta_\gamma \hphi (t_\gamma)\}} \ d\bt
\end{equation}
It is clear that this can be extended to the case when $F$ is a function of the loops in $\tGamma$.

\item Jeffrey-Weitsman compute the volume of the polytope $\cP_{\cT_\tGamma}$ by applying the Verlinde formula to count integral points in (integral dilates) of $\cP_{\cT_\tGamma}$. In fact, we find that integration over the polytope can be efficiently enforced by a continuous analogue of the Verlinde algebra, At its heart is the following identity for the indicator function of the \emph{trinion polytope}. This polytope is the set $\cP \subset \bbR^3$ consisting of points $(t_1,t_2,t_3)$ subject to the spherical triangle inequalities:
\[
\begin{split}
0 \le t_i \le1,\ i=1,2,3,\ \ 
t_1+t_2+t_3 \le 2,\ and\ 
|t_2-t_3| \le t_1 \le t_2+t_3.
\end{split}
\]
We have the equality in $L^2([0,1]^3)$, proved in the Appendix \ref{trinionkernel}:
\begin{equation}\label{trinionpolytope}
\mathbf{1}_{\cP}(t_3,t_2,t_1)= \sum_{n=0}^\infty 
\frac{\sqrt{2}}{(n+1) \pi} \vu_n(t_1) \vu_n(t_2)  \vu_n(t_3)
\end{equation}
where
\[
\vu_n(t) \equiv \sqrt{2} \sin{(n+1) \pi t}, \ \ n=0,\dots ,
\]
is an orthonormal basis for $L_2[[0,1]]$.

(A version of this appears in Campbell Wheeler's 2016 Masters thesis: \texttt{https://guests.mpim-bonn.mpg.de/cjwh/files/MastersThesis.pdf})

In fact, the formalism can be used to compute volumes of (parabolic) moduli spaces. All this is in Appendix \ref {appendixvolumesetc}.

\item \emph{The trinion graph in our running example.} The three loops $\gamma_1,\gamma_2,\gamma_3$ in $\Lambda$ are the retractions of three loops $\tgamma_1,\tgamma_2,\tgamma_3$ in the genus 2 surface $\bS$. The latter cut the surface into \emph{two} pants $1,2$ sharing the three circles as common boundaries. The trinion graph $\cT_\tGamma$ is given below:

 \begin{figure}[H]
\includegraphics[scale=1,trim=250 300 250 300]{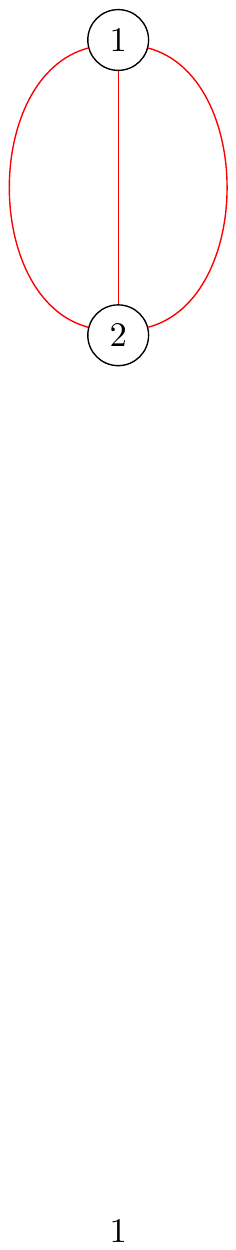}
\caption{The trinion graph of our running example.}\label{triniongraphofrunningexample}
\end{figure}

Let us compute the volume of the polytope:
\[
\begin{split}
&\sum_{n_1=0}^\infty \sum_{n_2=0}^\infty \frac{\sqrt{2}}{\pi (n_1+1)} \frac{\sqrt{2}}{\pi (n_2+1)}\\ \times 
&\int_0^1 \vu_{n_1}(t_1) \vu_{n_1}(t_2) \vu_{n_1}(t_3) \vu_{n_2}(t_1) \vu_{n_2}(t_2) \vu_{n_2}(t_3)\ dt_1 \ dt_2 \ dt_3\\
&=\sum_{n_1=0}^\infty \sum_{n_2=0}^\infty \frac{\sqrt{2}}{\pi (n_1+1)} \frac{\sqrt{2}}{\pi (n_2+1)} \delta_{n_1,n_2}\\
&=\sum_{n=0}^\infty  \frac{2}{\pi^2 (n+1)^2}\\
&=\frac{1}{3}
\end{split}
\]
This identification of indices corresponding to different vertices  will ocurr repeatedly.

\item In \S \ref{comparison} we compare the distribution of a single Wilson loop, as well as the joint distribution of two loops, w.r.to the symplectic and Haar (=lattice-theoretic) measures and show that they coincide in the large lattice limit. We also address constraints between loops -- ``nonabelian Gauss's law''. (Part of this discussion spills over to \S\ref{3dlattice}.)

\item In \S \ref{2dlattice}, we compute the symplectic versions of the partition function and Wilson loop for a two-dimensional periodic lattice, and discover that we recover the expressions of \cite{GrossWitten}. 

\item In \S \ref{3dlattice} we consider a three-dimensional periodic lattice, describe a natural pants decomposition of the surface $\bS$ that extends the plaquettes, draw the corresponding trinion graph, and proceed to write down 
expressions (as series) for the following:
\begin{itemize}
\item the symplectic partition function with the standard Wilson action, 
\item the symplectic partition function with the Migdal action,
\item the symplectic plaquette-plaquette correlation with the Migdal action, and
\item the symplectic 't Hooft loop expectation with the Migdal action,
\end{itemize}
The Migdal action \cite{Migdal,Witten} is a modification of the Wilson action which retains its qualitative features. In two dimensions it allows exact computation of the partition function w.r. to any triangulation of a surface; under any subdivision of the triangulation the partition function is preserved up to a coupling constant renormalisation. 

\item Here is the expression for the ``symplectic partition function'' with Migdal action:
\[
\begin{split}
Z'_{\be'} =
&\sum_{\substack{n_v=0 \\ v \in  \bbV_\Gamma}}^\infty \ \ \prod_{v \in \bbV_\Gamma}  (\frac{\sqrt{2}}{\pi (n_v+1)})^4\\
&\times \prod_{e \in \bbE_\Gamma}  \{\tsum_{l_e=|n_{v_1(e)}-n_{v_2(e)}|}^{n_{v_1(e)}+n_{v_2(e)}} \exp{\{-\frac{l_e(l_e+1){\be'}^2}{2}\}}(l_e+1)
\}
\end{split}
\]
Here $\Gamma$ is the dual lattice, $\bbV_\Gamma$ the set of its vertices, and $\bbE_\Gamma$ the set of edges. The notation $\tsum$ signals the omission of every alternate term in the sum. We have a slightly more elaborate expression for the symplectic partition function with the Wilson action. Each is a sum of positive terms. 
\end{enumerate}

The idea of relating lattice gauge theory on a graph to gauge theory on the associated surface has appeared elsewhere (albeit not in the context of the path integral) most notably in the works of A. Tyurin, for example \cite{Tyurin}. 

We close this Introduction with an important \textbf{caveat}. In the case of lattice gauge theories relevant to physics, in space-time dimensions two and three, we obtain expressions for quantities analogous to lattice-theoretic partition functions and expectation values of Wilson loops. We refer to these as ``symplectic partition functions'' and ``symplectic Wilson loop expectations''. Notwithstanding the impressive evidence, as yet we have no carefully formulated conjecture -- let alone theorems -- regarding the relationship between integrals over $\bM$ and over $\cM$.

\section{Background: Lattice gauge theory}\label{lattice}

Lattice gauge theory (\cite{Wilson}) provides finite-dimensional approximations to the Euclidean path integrals of gauge theory. The latter are ill-defined, and it is believed that these can be given a meaning as suitable limits of lattice gauge theory integrals, which we will define below. 

Lattice gauge theory has been intensely studied for over four decades by physicists by numerical methods. As for \emph{analytical} approaches, see {\cite{Chatterjee}}, which also serves as a survey of older work, both rigorous and heuristic.

\subsection{Generalities} Let $\Lambda$ be a finite connected graph. Let $\bbE=\bbE_\Lambda$ denote the set of edges. Let $\bbL=\bbL_\Lambda$ denote the set of edges together with a choice of orientation for each edge. Given an oriented edge $l$, we let $l'$ denote the same edge with orientation reversed.  Let $\bbV=\bbV_\Lambda$ denote the set of vertices.

Let $G$ be a compact Lie group, and let $\bA \subset G^\bbL$ denote the space of maps (``lattice gauge field configurations'') 
$$
\bbL \ni l \mapsto g_l \in G
$$
such that $g_{l'}=g_l^{-1}$.  Let $\bG=G^\bbV$, the space of maps (``lattice gauge transformations'') from $\bbV$ to $G$: a point of $\bG$ an assignation
$$
\bbV \ni v \mapsto h_v \in G
$$
The group $\bG$ acts on $\bA$:
$$
g \mapsto g^h; \ g^h_l=h_{head(l)} g_l h_{tail(l)}^{-1}
$$
where $tail(l),\ head(l)$ are the vertices at the tail and head of the oriented edge $l$. 

Upto a choice of orientation on each edge, $\bA=G^\bbE$, and the Haar measure (normalised so that the total volume is 1) on $G$ induces a gauge-invariant measure on $\bA$, independent of the choice. This measure we denote $\cD g$.

Suppose given an (oriented) path $\gamma$ in $\Lambda$, obtained by concatenating oriented edges $l_1,l_2,\dots,l_n$. Given a configuration $g \in G$,  its  \textit{(lattice) holonomy} $H_{\gamma}(g)$ along $\gamma$ is the product
$$
H_{\gamma}(g)=g_{l_n} g_{l_{n-1}} \dots g_{l_1}
$$
Note that $H_{\gamma}(g^h)=h(head(l_n))g_{l_n} g_{l_{n-1}} \dots g_{l_1}h^{-1}(tail(l_1))$

Fix a class-function $\phi$ on $G$. If $\gamma$ is closed, i.e., the head of $l_n$ is the  tail of $l_1$, and the map
$$
g \mapsto \phi_\gamma(g) \equiv \phi(H_\gamma(g))
$$
is independent of the base-point.  \emph{Note that the function $\phi_\gamma$ is gauge-invariant.} That is to say,
$$
\phi_\gamma(g^h)=\phi_\gamma(g)\ (g \in \bA, \ h \in \bG),\ 
$$
so that $\phi_\gamma$ descends to a function (for which we retain the same notation) on $\bA/\bG$. 

If $\phi$ is an \emph{even} class function, i.e., if $\phi(\mathtt{g})=\phi(\mathtt{g}^{-1}),\ \mathtt{g} \in G$, the functions $\phi_\gamma$ do not depend on the orientation of the closed path $\gamma$. We will consider only even class functions.

Lattice path integrals involve integrals of the form
$$
\int \exp{\{\sum_{\gamma \in \sfP} \beta_\gamma \phi_\gamma \}}F(g) \ \cD g
$$
Here $\cD g$ is the Haar measure on $\bA$ normalised so that $\int \cD g=1$,  $\sfP$ is a set of closed paths  in $\Lambda$, the $\beta_\gamma$ are real numbers, and $F$ is some gauge-invariant function.

From now on, unless otherwise flagged (when we will briefly consider the case $G=U(1)$), we will take $G$ to be $SU(2)$. Note that in this case the ring of real-algebraic class functions on $SU(2)$ is generated by the trace and \emph{all} class functions are even. The \emph{Wilson loop} is defined by:
\[
W_\gamma(g)=trace(H_\gamma(g))
\]
(In other words, $W_\gamma=trace_\gamma$.)
Note that $W_\gamma(g)$  is real-valued, $-2 \le W_\gamma(g) \le 2$, and $W_\gamma(g)$ is peaked around $g$ such that $H_\gamma(g)=Id$. There is a unique $t_\gamma(g) \in [0,1)$ such that 
\begin{equation}\label{deft}
W_\gamma(g)=2\cos(\pi t_\gamma(g))
\end{equation}

\subsection{Physically relevant lattice gauge theories} Let $d$ denote the dimension of space-time. (We will be concerned here with $d=2,3$.) In this context, by \textit{lattice} we mean the graph constructed as follows. First construct the infinite graph with vertices $\bbZ^d \subset \bbR^d$, and edges (`links') going from each vertex to its $2d$ nearest neighbours. Let $L$ be a large positive integer, and identify vertices $v_1,v_2$ if for every coordinate the difference is a multiple of $L$. Identify edges suitably. The resulting graph is  $\Lambda_{d,L}$. We will usually suppress the subscript and set $\Lambda=\Lambda_{d,L}$. 

A \textit{plaquette} $P$ is the boundary of a unit square whose corners are vertices of $\Lambda$ and whose edges are parallel to the coordinate axes in $\bbR^d$.  We will think of plaquettes as closed unoriented paths.

The measure of interest is
$$
\exp{\{{\frac{2}{\be^2} \sum_{plaquettes\ P}  W_P(g) }\}} \ \cD g
$$
where $e$ is a positive constant. In terms of the discussion in the Introduction, we have taken for $\sfP$ the set of plaquettes. The function
\[
g \mapsto -\frac{2}{\be^2} \sum_{plaquettes\ P}  W_P(g)
\]
is the ``Wilson action''.

\begin{figure}[H]
\includegraphics[scale=0.3,trim=20 -40 20 20]{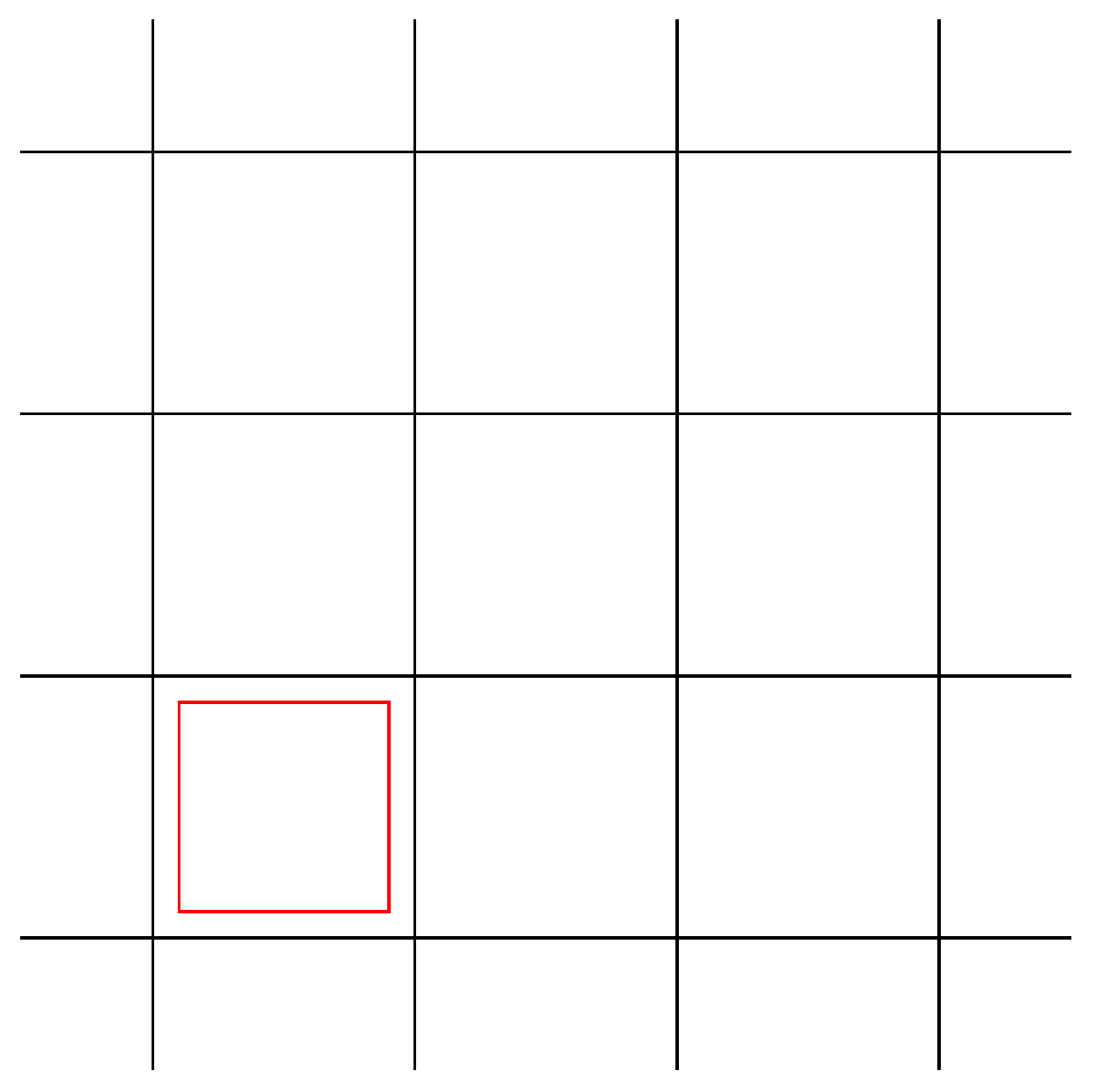}
\caption{(Part of) a $d=2$ lattice, with one plaquette $P$ (in red) shown, drawn displaced from the graph for clarity.}\label{2DLattice}
\end{figure}

 \subsection{The associated handlebody and surface}\label{pumping}

``Pumping up''  the lattice, we get a handlebody and surface as in the Introduction.

\begin{figure}[H]
  \includegraphics[scale=0.5,trim=20 20 20 20]{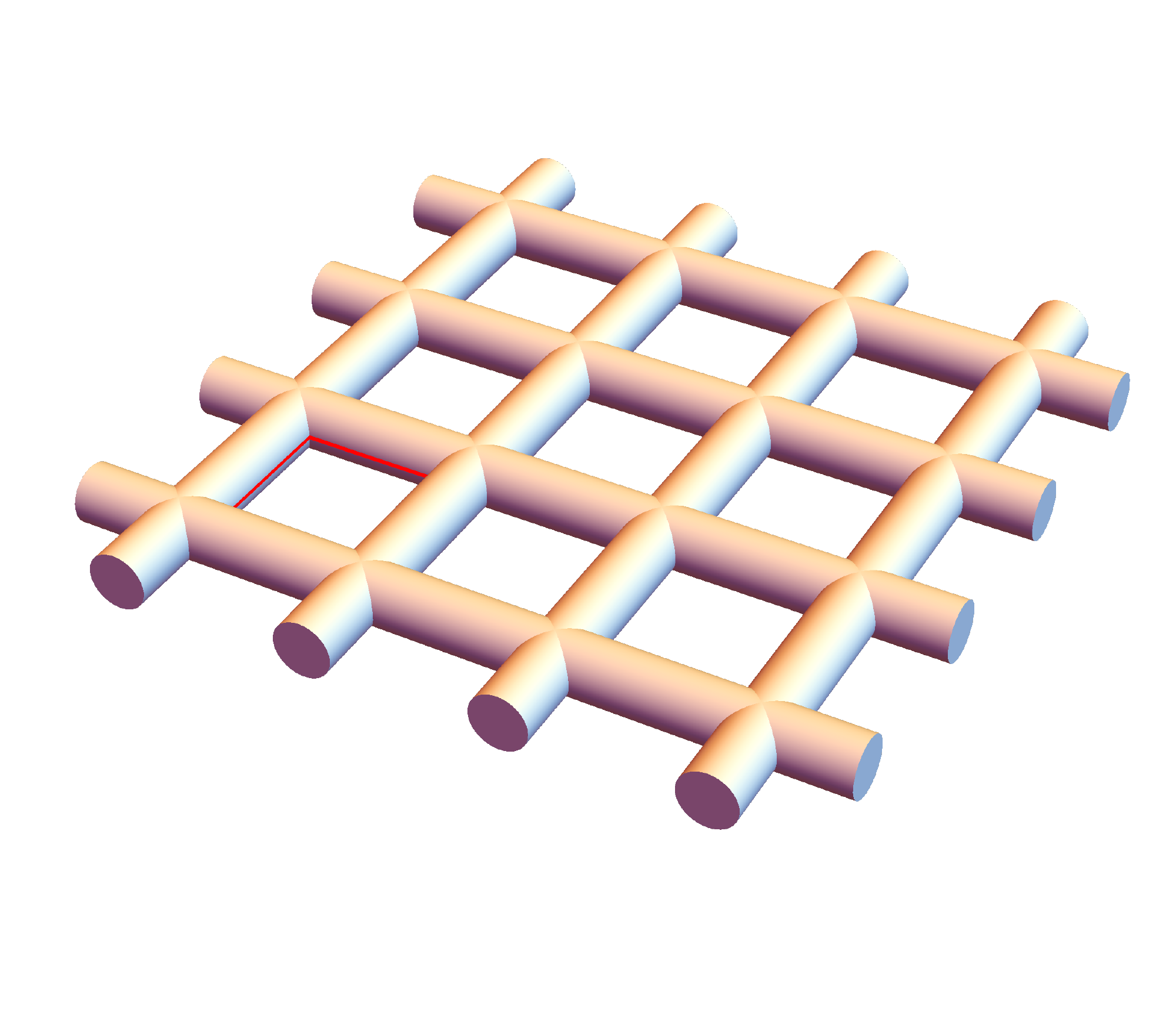}
    \caption{(Part of) a $d=2$ ``lattice of tubes'', with a loop $\tP$ (in red) that retracts to the plaquette $P$ of Figure \ref{2DLattice}.}\label{2DTubes}
    \end{figure}

Next to some numerology. Given a hypercubical lattice $\Lambda$ in $d$ space-time dimensions, with periodic boundary conditions of period $L$ in each direction, the total number of vertices  of $\Lambda$ is $L^d \equiv N$.  There are  $^dC_2 N$ plaquettes.  The genus $\sfb$ of the corresponding surface $\bS$ and $N$ are related by
$$
(d-1)N=\sfb-1
$$

\begin{remark} A compact oriented surface is determined up to diffeomorphism by its genus. Our constructions, however, break the diffeomorphism invariance. Specifically, we retain the invariance under the subgroup $Diff^0 \ \bS$ of diffeomorphisms connected to the identity, but break the invariance under the modular group $Diff \ \bS/Diff^0 \ \bS$ to a subgroup of translations isomorphic to $(\bbZ/L\bbZ)^d$.
\end{remark}

\section{Background: The Character variety, Wilson loops, and the Goldman bracket}\label{CharVar}

\subsection{Surfaces, pants decompositions.} 

We recall some basic constructions whose natural context is the hyperbolic geometry of surfaces and three-manifold topology. Here we are concerned mostly with elementary topological aspects. 
 
A ``pair of pants'' or ``trinion'' is a two-manifold with smooth boundary diffeomorphic to a 2-sphere with three (disjoint) open discs removed from it. Given a genus $\sfb$ surface $\bS$, we can decompose it into (mutually disjoint except for boundary circles) $2(\sfb-1)$ trinions, identifying the boundary circles two at a time; this is called a ``pair-of-pants decomposition''.  The boundaries of the pants consist of $3(\sfb-1)$  mutually disjoint embedded circles $\tGamma=\{\tgamma_1,\dots,\tgamma_{3(\sfb-1)}\}$ which determine the decomposition; we will talk of a ``pants decomposition $\tGamma$''.  It is a fact that given a compact oriented surface and mutually disjoint embedded circles such that none of them bounds a disc and no two are freely isotopic, they can be extended to be part of (the boundaries of) a pants decomposition.

Associated to such a decomposition $\tGamma$ are

\begin{enumerate}
\item a connected trivalent graph (``the trinion graph'') ${\cT_\tGamma}$ in which each vertex represents one of the trinions in the decomposition, and two distinct vertices are joined by an edge if the two corresponding trinions share a boundary circle, and an edge connects a vertex to itself if two boundary components  of the corresponding trinion get identified on $\bS$. \emph{We shall only consider pants decompositions such that $\cT_\tGamma$ has no loops and cannot be disconnected by removing one edge.
}\item a handlebody (i.e., a 3-manifold which is a connected sum of solid tori) $M_{\cT_\tGamma}$ with boundary $\bS$, such that each of the boundary circles $\tgamma_i$ bounds a disc embedded in $M_{\cT_\tGamma}$. The handlebody $M_{\cT_\tGamma}$ can be realised as a closed regular neighbourhood of an  embedding of ${\cT_\tGamma}$ in any three-manifold, say, $\bbR^3$.
\end{enumerate}

\underline{Remark/Warning}: Note that in the context of our running example of the graph $\Lambda$, the handlebody $M_{\cT_\tGamma}$ corresponding to the pants decomposition $\tGamma$ (with graph the of Figure \ref{triniongraphofrunningexample}) is \emph{not} $M_\Lambda$. For one, the loops $\tgamma_1,\tgamma_2,\tgamma_3$ do not bound discs in $M_\Lambda$. 

\subsection{The moduli space of flat connections} Let $\bS$ be a compact oriented surface of genus $\sfb$ without boundary. Let  $\cM$  denote the set of gauge-equivalence classes of flat connections on the trivial  $SU(2)$-bundle on $\bS$, or equivalently the space of flat $\mathfrak{su}(2)$-valued gauge-fields  modulo gauge transformations. We will let $\cM^s$ denote the gauge-equivalence classes of \emph{irreducible} $SU(2)$ connections. (A $G$-connection is irreducible if the only gauge transformations that leave it unchanged are the constant ones with values in the centre of $G$.) The set $\cM^s$ is naturally a manifold of dimension  $6(\sfb-1)$, and by results of  \cite{AtiyahBott} and \cite{Goldman} it carries a natural symplectic structure.

There are well-known bijections
\begin{equation*}
\begin{split}
\cM&=Rep(\pi_1(\bS),SU(2))/SU(2)\\
\cM^s&=IrRep(\pi_1(\bS),SU(2))/SU(2)\\
\end{split}
\end{equation*}
where $\pi_1(\bS)$ is the fundamental group of $\bS$, based at a point $x_0$ (which we omit from the notation), $Rep$ (resp., $IrRep$) denotes the 
set of representations (resp., irreducible representations) into $SU(2)$ and the notation $/SU(2)$ signals taking the quotient by the adjoint action. The spaces on the right -- these are the character varieties of the title --  have natural structures of real affine algebraic varieties. For any class function $\phi$, a closed path $\tgamma$ in $\bS$, and a flat $SU(2)$ connection $A$ we define $\phi_\tgamma(A)$ to be $\phi(H_\gamma(A))$ where $H_\gamma(A)$ is the holonomy of $A$ around $\tgamma$ at any base point. In particular the Wilson loop functions $W_\tgamma$ correspond 
to the trace.

The Wilson loop functions $W_\tgamma$ are real-valued and algebraic, and in fact \cite{CharlesMarche} generate the algebra of such functions.

\subsection{Poisson brackets of Wilson loops}  Poisson brackets between the functions $W_\tgamma$ were computed by W. Goldman. We recall the formulae in the cases of interest. Let $\tgamma$ and $\teta$ be two closed \emph{oriented} loops.
\begin{itemize}
\item If $\tgamma$ and $\teta$ are disjoint, or can be isotoped to be so, then $\{W_\tgamma,W_\teta\}=0$, and
\item If $\tgamma$ and $\teta$ intersect transversally, then
\begin{equation}\label{GoldmanBrackets}
\{W_\tgamma,W_\teta\}=\sum_{p \in \tgamma \cap \teta}
\epsilon_p(\tgamma,\teta) \big(W_{\tgamma \#_p \teta}-\frac{1}{2}W_\tgamma W_\teta\big)
\end{equation}
Here $\epsilon_p(\tgamma,\teta)=\pm 1$ is the oriented intersection number of $\tgamma$ and $\teta$ at $p$, and $\tgamma \#_p \teta$ is the closed loop obtained by concatenating $\tgamma$ and $\teta$ at $p$.
\end{itemize}

Note that the intersection number changes sign if the orientation on \emph{one} of the loops is reversed: $\tgamma \mapsto \tgamma'$. This is consistent because on the one hand the Wilson loop function does not change if the loop flips orientation, but
\[
W_{\tgamma \#_p \teta}+W_{\tgamma' \#_p \teta}=W_\tgamma W_\teta
\]
because of the identity
\[
trace(\mathtt{g}_1\mathtt{g}_2 )+trace(\mathtt{g}^{-1}_1\mathtt{g}_2 )=trace(\mathtt{g}_1)trace(\mathtt{g}_2)
\]
that holds for  $\mathtt{g}_1, \ \mathtt{g}_2$ in $SU(2)$.

As a consequence, the Hamiltonian vector fields generated by Wilson loop functions corresponding to disjoint loops commute. Goldman gave a geometric description of these flows, and this has the consequence, beautifully exploited by L. Jeffrey and J. Weitsman, that a pants decomposition of $\Sigma$ yields a Hamiltonian action of the compact torus $U(1)^{3(\sfb-1)}$ on $\cM$, the map $\bt$ being essentially the moment map. 

Although the hypotheses of Duistermaat-Heckman theorem do quite hold (because of the singularities of $\cM$), its conclusions (as summarised in item \ref{DH} of the Introduction) hold. For a clear account, see \cite{Audin}.

\section{Comparison of integrals on $\bM$ and $\cM$}\label{comparison}

Given a closed, non-self-interesting path $\gamma$ on the lattice $\Lambda$, and any real-valued continuous function $f_1$ on $[0,\pi]$, one can check:
\[
\begin{split}
\int f_1(\pi t_\gamma(g)) \cD g&=\int f_1(\arccos{\frac{1}{2}trace{H_\gamma(g)}}) \cD g\\
&\underset{*}=\int_G f_1(\arccos{ \frac{1}{2}trace(\mathtt{g})}) d\mathtt{g}\\
&=\int_0^1 f_1(\pi t) \tdt
\end{split}
\]
 where we have introduced the notation $\tdt=2\sin^2{\pi t}\ dt$. The equality (*) is obtained by using gauge transformations to fix all the link variables except one along $\gamma$ to $Id$. This reasoning will be used repeatedly below without mention. 
 
 One can also check that given \emph{two} disjoint closed non-self-interesting loops $\gamma_1,\gamma_2$ the corresponding Wilson loops are independent (as random variables).
 
\subsection{Distribution of one Wilson loop.} Let us consider a simple closed curve that is neither contractible nor separating. Without loss of generality we can suppose that it is one of the bounding circles (say, $\tgamma_1$) of a pants decomposition $\tLambda$ with associated trinion graph $\cT_\tLambda$. Given $f_1$ on $[0,\pi]$, consider the ratio of integrals
\[
I_{f_1} \equiv \frac{\int_{\cP_{\cT_\tLambda}} f_1(\pi t_1)d\bt}{\int_{\cP_{\cT_\tLambda}} 1 d\bt}
\]
We use the identity \ref{trinionpolytope} and conclude that
\[
\begin{split}
I_{f_1}&= \frac{\sum_{n=0}^\infty 
(\frac{\sqrt{2}}{(n+1) \pi})^{2(\sfb-1)} \int_0^1 f_1(\pi t_1) \vu_n^2 (t_1) dt_1
}{\sum_{n=0}^\infty 
(\frac{\sqrt{2}}{(n+1) \pi})^{2(\sfb-1)}}\\
&= \frac{\sum_{n=0}^\infty 
(\frac{\sqrt{2}}{(n+1) \pi})^{2(\sfb-1)} \int_0^1 f_1(\pi t_1) 2\sin^2{(n+1) \pi t_1} dt_1
}{\sum_{n=0}^\infty 
(\frac{\sqrt{2}}{(n+1) \pi})^{2(\sfb-1)}}
\end{split}
\]
If $\sfb=2$ (as in our running example), we get
\[
\begin{split}
I_{f_1}&= \frac{\sum_{n=0}^\infty 
(\frac{\sqrt{2}}{(n+1) \pi})^{2} \int_0^1 f_1(\pi t_1) 2\sin^2{(n+1) \pi t_1} dt_1
}{\sum_{n=0}^\infty 
(\frac{\sqrt{2}}{(n+1) \pi})^{2}}\\
&= \frac{\sum_{n=0}^\infty 
(\frac{\sqrt{2}}{(n+1) \pi})^{2} \int_0^1 f_1(\pi t_1) (2\sin^2{(n+1) \pi t_1}-1)dt_1
}{\sum_{n=0}^\infty 
(\frac{\sqrt{2}}{(n+1) \pi})^{2}}\\
&+ \frac{\sum_{n=0}^\infty 
(\frac{\sqrt{2}}{(n+1) \pi})^{2} \int_0^1 f_1(\pi t_1) dt_1
}{\sum_{n=0}^\infty 
(\frac{\sqrt{2}}{(n+1) \pi})^{2}}\\
&= -\frac{\sum_{k=1}^\infty 
\frac{2}{k^2 \pi^2} \int_0^1 f_1(\pi t_1) \cos{2k \pi t_1}dt_1
}{\sum_{n=0}^\infty 
\frac{2}{k^2 \pi^2}}\\
&+\int_0^1 f_1(\pi t_1) dt_1
\\
&= \int_0^1 f_1(\pi t_1)\{1-\frac{B_2(t_1)}{b_2}\}dt_1\\
&= \int_0^1 f_1(\pi t_1)6(t_1-t_1^2)dt_1\\
\end{split}
\]
where
$$
B_{2}(x)=\frac{1}{\pi^{2}} \sum_{k=1}^\infty \frac{\cos(2 \pi k x)}{k^{2}} = x^2-x+1/6 
$$
is a Bernoulli polynomial

This can be verified directly: 
\[
\begin{split}
I_{f_1}&=\frac{\int_{\cP} f_1(\pi t_1)dt_1dt_2dt_3}{\int_{\cP} dt_1dt_2dt_3}\\
&=\frac{\int_0^1 f_1(\pi t_1) area(\{(t_2,t_3)|(t_1,t_2,t_3) \in \cP\}) dt_1}{\int_0^1 area(\{(t_2,t_3)|(t_1,t_2,t_3) \in \cP\}) dt_1}\\
&=\frac{\int_0^1 f_1(\pi t_1)2t_1(1-t_1) dt_1}{\int_0^1 2t_1(1-t_1) dt_1}\\
&= \int_0^1 f_1(\pi t_1)6(t_1-t_1^2)dt_1
\end{split}
\]

On the other hand, as $\sfb \to \infty$, we get
\[
I_{f_1} \to \int_0^1 f_1(\pi t_1) 2\sin^2 {\pi t_1} dt_1=\int_0^1 f_1(\pi t_1)\ \tdt
 \]
For a more careful treatment of the limit, see Appendix \ref{appendixvolumesetc}.

\subsection{Joint distribution of two disjoint Wilson loops.} Given two disjoint loops $\tC_1,\tC_2$ and given a function $f_2$ consider the ratio
\[
I_{f_2} \equiv \frac{\int_{\cP_{\cT_\tLambda}} f_2(\pi t_1,\pi t_2)d\bt}{\int_{\cP_{\cT_\tLambda}} 1 d\bt}
\]
This time we need to assume that the loops $\tC_1,\tC_2$ are a nonseparating pair, in other words, that $\bS \setminus (\tC_1\cup \tC_2)$ is connected.
As before we conclude
\[
I_{f_2}= \frac{\sum_{n=0}^\infty 
(\frac{\sqrt{2}}{(n+1) \pi})^{2(\sfb-1)} \int f_2(\pi t_1,\pi t_2)\ 2\sin^2{(n+1) \pi t_1} \ 2\sin^2{(n+1) \pi t_2} \ dt_1 dt_2}
{\sum_{n=0}^\infty 
(\frac{\sqrt{2}}{(n+1) \pi})^{2(\sfb-1)}}
\] 
As $\sfb \to \infty$, we get
\[
I_{f_2} \to \int_0^1 \int_0^1 f_2(\pi t_1,\pi t_2)\ \tdt_1\ \tdt_2
\]
which shows that in the limit of large graphs, disjoint Wilson loops are independent.

\subsection{``Nonabelian Gauss's Law'' I} First some preliminaries.  Recall that the characters (traces of irreducible representations) of $SU(2)$ are the functions
\[
SU(2) \ni \mathtt{g} \mapsto \chi_l(\mathtt{g})=\frac{\sin{(l+1) \pi t_\mathtt{g}}}{\sin{\pi t_\mathtt{g}}},\ l=0,1,
\]
where $trace(\mathtt{g})=\chi_1(\mathtt{g})=2\cos{\pi t_\mathtt{g}}$. We will define functions $\hchi_l:[0,\pi]\to \bbR$ by
\[
\hchi_l(u)=\frac{\sin{(l+1)u}}{\sin{u}}
\]
Note that $\vu_l(t)=\sqrt{2} \hchi_l(\pi t) \sin{\pi t}$.

Consider three loops $\tC_1,\tC_2,\tC_3$ in $\bS$ such that $\bS \setminus (\tC_1\cup \tC_2 \cup \tC_3)$ is the disjoint union of a sphere with three holes and a connected surface $\bS'$, also with three holes. Retracting to the spine $\Lambda$, these yield three loops $C_1,C_2,C_3$ such that (for suitable choices of orientations) for any lattice gauge field configuration $g$, the lattice holonomies $\mathtt{h}_1(g),\mathtt{h}_2(g),\mathtt{h}_3(g)$ obey $\mathtt{h}_1(g) \mathtt{h}_2(g)=\mathtt{h}_3(g)$. Given three characters $\chi_{l_i}, i=1,2,3$, the lattice gauge theory integral of the product of the corresponding functions $(\chi_{l_i})_{\gamma_i}$ obeys
\[
\begin{split}
\int_{\bA} (\chi_{l_1})_{\gamma_1} (g) (\chi_{l_2})_{\gamma_2} (g) (\chi_{l_3})_{\gamma_3} (g)  \cD g&=\int_{\bA} \chi_{l_1} (\mathtt{h}_1(g)) \chi_{l_2} (\mathtt{h}_2(g)) \chi_{l_3} (\mathtt{h}_3(g)) \cD g\\
&=\int_{G \times G} \chi_{l_1} (\mathtt{h}_1) \chi_{l_2} (\mathtt{h}_2) \chi_{l_3} (\mathtt{h}_1\mathtt{h}_2) d\mathtt{h}_1 d\mathtt{h}_2\\
&=\frac{1}{l_1+1}\delta_{l_1,l_2,l_3}\\
\end{split}
\]
using the property: $\int_G \chi_l(\mathtt{g}_1\mathtt{h})\chi_m(\mathtt{h}^{-1}\mathtt{g}_2) d\mathtt{h}=\frac{\delta_{l,m}}{l+1}\chi_l(\mathtt{g}_1\mathtt{g}_2)$.

Choose pants decomposition such that the above sphere-with-three-holes is one of the pants in the decomposition. The trinion graph, in the neighbourhood of the vertex representing this pair of pants, looks like this:

\begin{figure}[H]
  \includegraphics[scale=.7,trim=90 250 50 120]{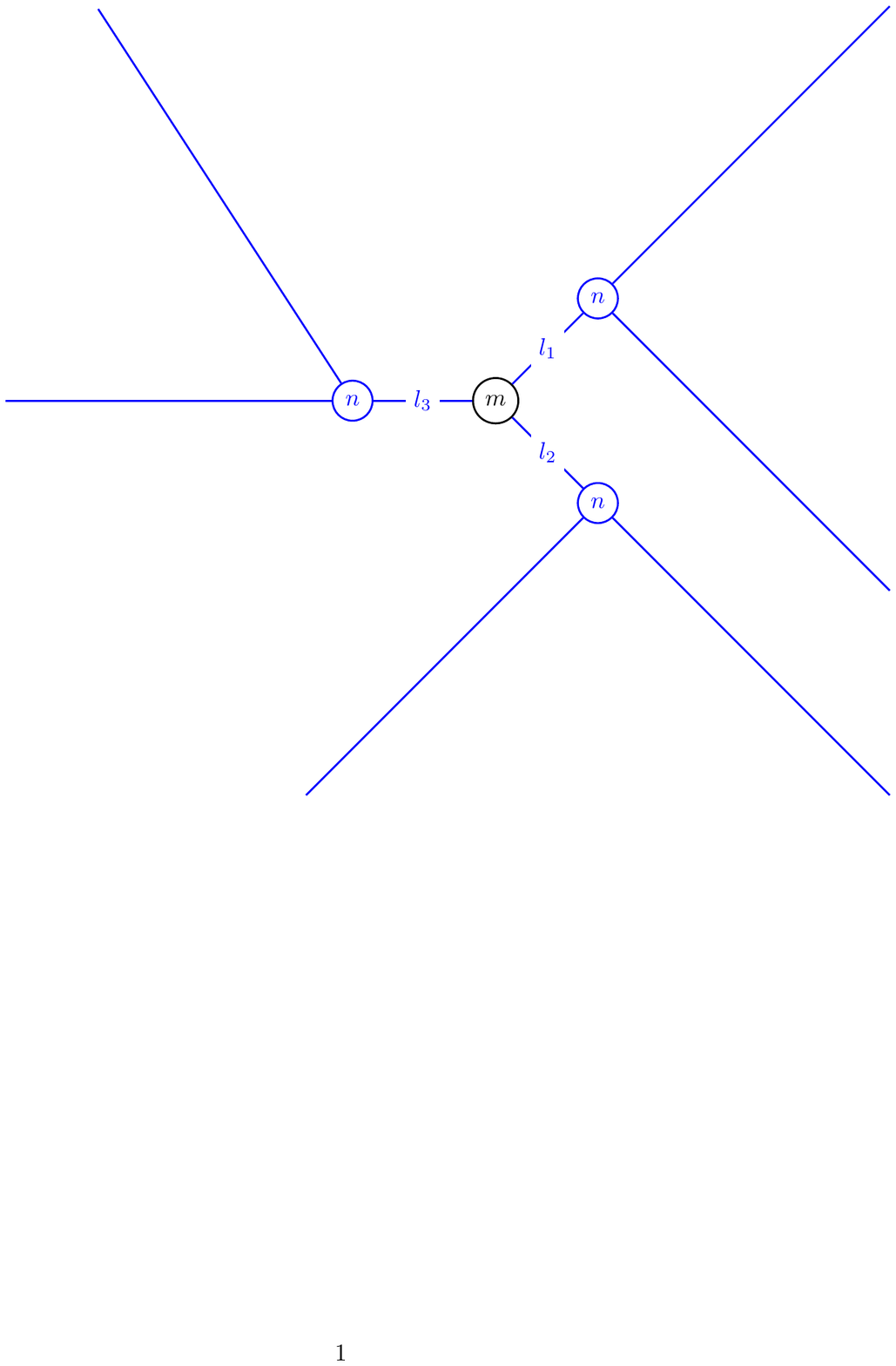}
    \caption{The trinion graph locally around the vertex corresponding to the sphere-with-three-holes; the labels on the edges and vertices correspond to indices in Equation (\ref{ThreePlaquettesEquation}).}
    \label{TrinionThreePlaquettes}
    \end{figure} 

The trinion integral, therefore, yields:
\begin{equation}\label{ThreePlaquettesEquation}
\begin{split}
&\frac{1}{\sum_{n=0}^\infty (\frac{\sqrt{2}}{(n+1) \pi})^{2(\sfb-1)}}\\
\times &\{
\sum_{n=0}^\infty (\frac{\sqrt{2}}{(n+1) \pi})^{2(\sfb-1)-1}  \sum_{m=0}^\infty (\frac{\sqrt{2}}{(m+1) \pi})\\
&\int \hchi_{l_1} (\pi t_1) \hchi_{l_2} (\pi t_2) \hchi_{l_3} (\pi t_3) \vu_n(t_1) \vu_m (t_1) \vu_n(t_2) \vu_m (t_2) \vu_n(t_3) \vu_m (t_3) dt_1 dt_2 dt_3\}
\end{split}
\end{equation}
As $\sfb \to \infty$, this tends to
\[
\begin{split}
&\frac{\pi}{\sqrt{2}}
\times \{
\sum_{m=0}^\infty (\frac{\sqrt{2}}{(m+1) \pi}\\
&\int \hchi_{l_1} (\pi t_1) \hchi_{l_2} (\pi t_2) \hchi_{l_3} (\pi t_3) \vu_1(t_1) \vu_m (t_1) \vu_1(t_2) \vu_m (t_2) \vu_1(t_3) \vu_m (t_3) dt_1 dt_2 dt_3\}\\
&=\frac{\pi}{\sqrt{2}}
\times \{
\sum_{m=0}^\infty (\frac{\sqrt{2}}{(m+1) \pi}\\
&\int \hchi_{l_1} (\pi t_1) \hchi_{l_2} (\pi t_2) \hchi_{l_3} (\pi t_3)  \hchi_m (\pi t_1) \hchi_m (\pi t_2)  \hchi_m (\pi t_3)\  \tdt_1 \ \tdt_2 \ \tdt_3\}\\
&=\frac{1}{l_1+1} \delta_{l_1,l_2,l_3}\\
\end{split}
\]

We will consider another case of the ``Nonabelian Gauss's Law'' below.

\section{Symplectic Lattice Theory, $d=2$}\label{2dlattice}

\subsection{Preliminaries} We begin with a pants decomposition of $\bS$ such that the lifted plaquettes $\tP_i,i=1,\dots,N$ are among the boundary circles. 

 \begin{figure}[H]
  \includegraphics[scale=0.5,trim=20 20 20 20,clip]{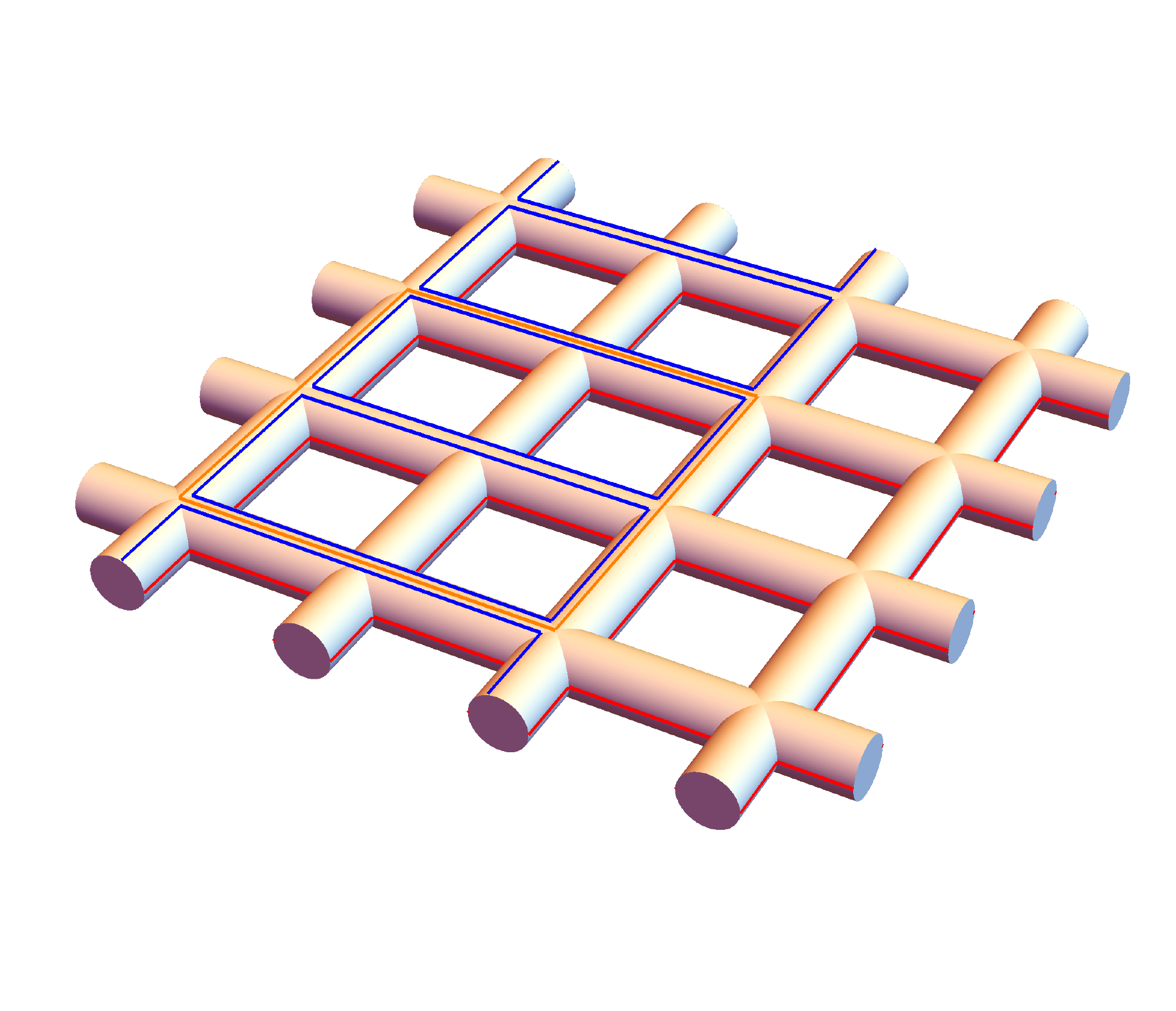}
    \caption{(Part of) a pants decomposition extending the plaquette loops (in red).}\label{2DTubesPlaquetteTrinions}
    \end{figure}

We suppose that $N$ is a power of 2:
$$
N=2^{2n}
$$
Cut $\bS$ along each lifted plaquette. This yields two tori, which we label $\bS'$ and $\bS''$, each with $N$ discs cut out, the boundaries being copies $\tP'_i$ and $\tP''_i$ of the $\tP_i$. Each of the two tori has a pants decomposition (illustrated below in the case $n=3$ -- the opposite sides of the rectangle are identified to give a torus) into 
$$
2\{2^{2n-1}+2^{2n-2}+\dots+2+2\}=2\{2(2^{2n-1}-1)+2\}=2^{2n+1}=2N=2(\sfb-1)
$$
trinions. 
 \begin{figure}[H]
  \includegraphics[scale=0.7,trim=140 120 100 120,clip]{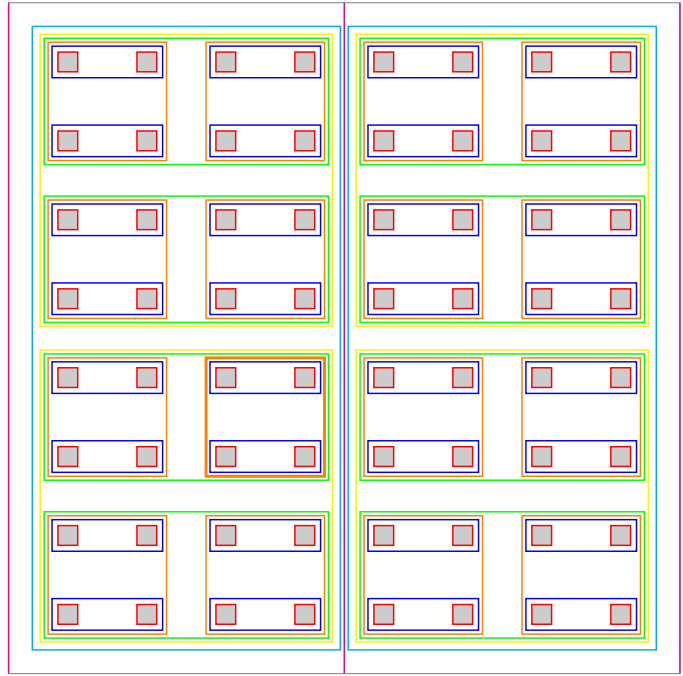}
    \caption{(Part of) a pants decomposition extending the plaquette loops (in red).}\label{2DTubesPlaquetteTrinions}
    \end{figure}   
This gives $2 \times 2^{2n}=2N=2(\sfb-1)$ trinions which is as it should be.
There are $2\times 2+2\{2^{n}+\dots+2\}-2^{n}=4+4(2^{n}-1)-2^{n}=3 \times 2^n=3N=3(\sfb-1)$ boundary circles, among them the $N$ plaquettes. There is one subtlety, though. If the two tori are glued in the obvious way, the $N$ ``extra'' plaquettes from each of them will be identified in pairs under the retraction to $\Lambda$. To avoid this we glue the tori after translation in (say) the $x$-direction by one unit.

 \begin{figure}[H]
  \includegraphics[scale=0.75,trim=120 160 60 120,clip]{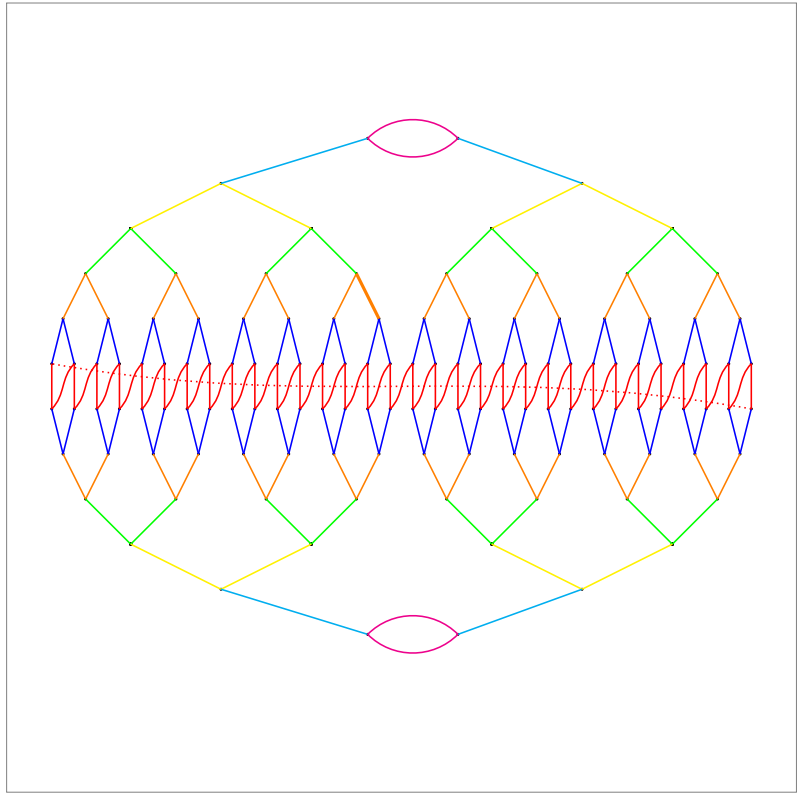}
    \caption{The trinion graph corresponding to the pants decomposition in Figure \ref{2DTubesPlaquetteTrinions}).}\label{2DTrinionGraph}
    \end{figure}

\subsection{Partition function and Wilson Loop}

The symplectic partition function evaluates to:
\[
Z_\be \equiv \sum_{l,m \ge 0} \{\frac{\sqrt{2}}{\pi (l+1)}\}^{N} \{\frac{\sqrt{2}}{\pi (m+1)}\}^{N} \{\int_0^1 \vu_l(t) \vu_m(t) \exp{(\frac{4}{\be^2} \cos{\pi t})}dt\}^N\\
\]
and
\[
\begin{split}
\frac{Z_\be}{Z_\infty}  \underset{large\ N}\sim\ \  &\{\int_0^1 \exp{(\frac{4}{\be^2} \cos{\pi t})}2 \sin^2{\pi t}\ dt\}^N\\
&=\{\int_0^1 \exp{(\frac{4}{\be^2} \cos{\pi t})}\ \tdt\}^N=z(\be^2,2)^N
\end{split}
\]
in agreement with (and, with minor changes,  using the notation of) the expression for the classic lattice partition function, \cite{GrossWitten}.

Let us compute symplectic expectation values of  Wilson loops. Consider a loop $C$ of length $4N'=4\times 2^{n'}$ in our pants decomposition . (For example, in our example, the thick orange line represents a loop of length 8.) We have
\[
\begin{split}
<W_C> &\equiv \frac{1}{Z_\be} \sum_{l,l',m \ge 0} \{\frac{\sqrt{2}}{\pi (l+1)}\}^{N-N'} \{\frac{\sqrt{2}}{\pi (l'+1)}\}^{N'}\{\frac{\sqrt{2}}{\pi (m+1)}\}^{N}\\ &\{\int_0^1 \vu_l(t) \vu_m(t) \exp{(\frac{4}{\be^2} \cos{\pi t})\ dt}\}^{N-N'}\{\int_0^1 \vu_{l'}(t) \vu_m(t) \exp{(\frac{4}{\be^2} \cos{\pi t})\ dt}\}^{N'}\\
 \\ &\times \int_0^1 \vu_l(t)\vu_{l'}(t) \ 2\cos{\pi t}dt
\end{split}
\]
Note that
\[
\begin{split}
\int_0^1 \vu_0(t)\vu_{l'}(t) \ 2\cos{\pi t}dt&=\int_0^1 2 \sin{\pi t}\ \sin{(l'+1)\pi t}\ 2\cos{\pi t} dt\\
&=2\int_0^1  \sin{2 \pi t} \sin{(l'+1)\pi t}\ dt=\delta_{l',1}
\end{split}
\]
so that
\[
\begin{split}
<W_C> &\underset{large\ N}\sim\ \ 
 \frac{\sum_{l' \ge 0} \frac{1}{(l'+1)^{N'}}\{\int_0^1 \exp{(\frac{4}{\be^2} \cos{\pi t})}2 \sin{(l'+1)\pi t}\sin{\pi t}\ dt\}^{N'}}{\{\int_0^1 \exp{(\frac{4}{\be^2} \cos{\pi t})}2 \sin^2{\pi t}\ dt\}^{N'}}\\
&\times \int_0^1 2\sin{\pi t} \sin{(l'+1) \pi t}\ 2\cos{\pi t}dt\\
&=\frac{1}{2^{N'}}\frac{\{\int_0^1 \exp{(\frac{4}{\be^2} \cos{\pi t})}2 \sin (2\pi t)\sin{\pi t}\ dt\}^{N'}}{\{\int_0^1 \exp{(\frac{4}{\be^2} \cos{\pi t})}2 \sin^2{\pi t}\ dt\}^{N'}}\\
&=\frac{1}{2^{N'}}\frac{\{\int_0^1 \exp{(\frac{4}{\be^2} \cos{\pi t})}2\cos (\pi t)\ \tdt\}^{N'}}{\{\int_0^1 \exp{(\frac{4}{\be^2} \cos{\pi t})}\ \tdt\}^{N'}}\\
&=\{w(\be^2,2)\}^{N'}
\end{split}
\]
again in agreement with (and using the notation of) \cite{GrossWitten}.

\section{Symplectic Lattice Theory, $d=3$}\label{3dlattice}

\subsection{Preliminaries} 

In this case, the lattice looks like this
\begin{figure}[H]
  \includegraphics[scale=0.3,trim=25 25 25 25,clip]{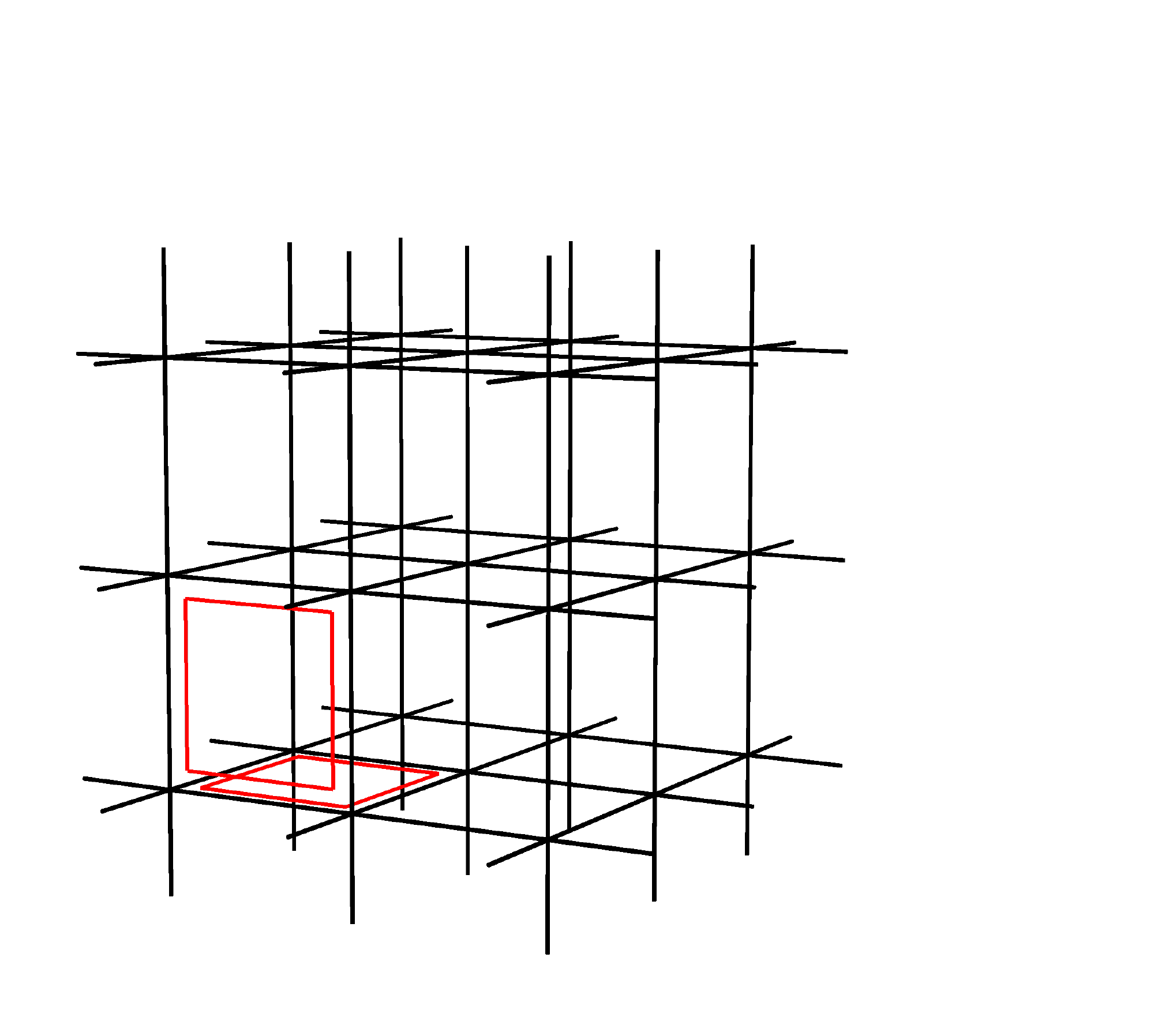}
    \caption{(Part of) a lattice $\Lambda$, d=3, with two plaquettes drawn.}\label{3DLattice}
    \end{figure} 
and the pumped-up lattice looks locally like this:
\begin{figure}[H]
  \includegraphics[scale=0.5,trim=25 25 25 25,clip]{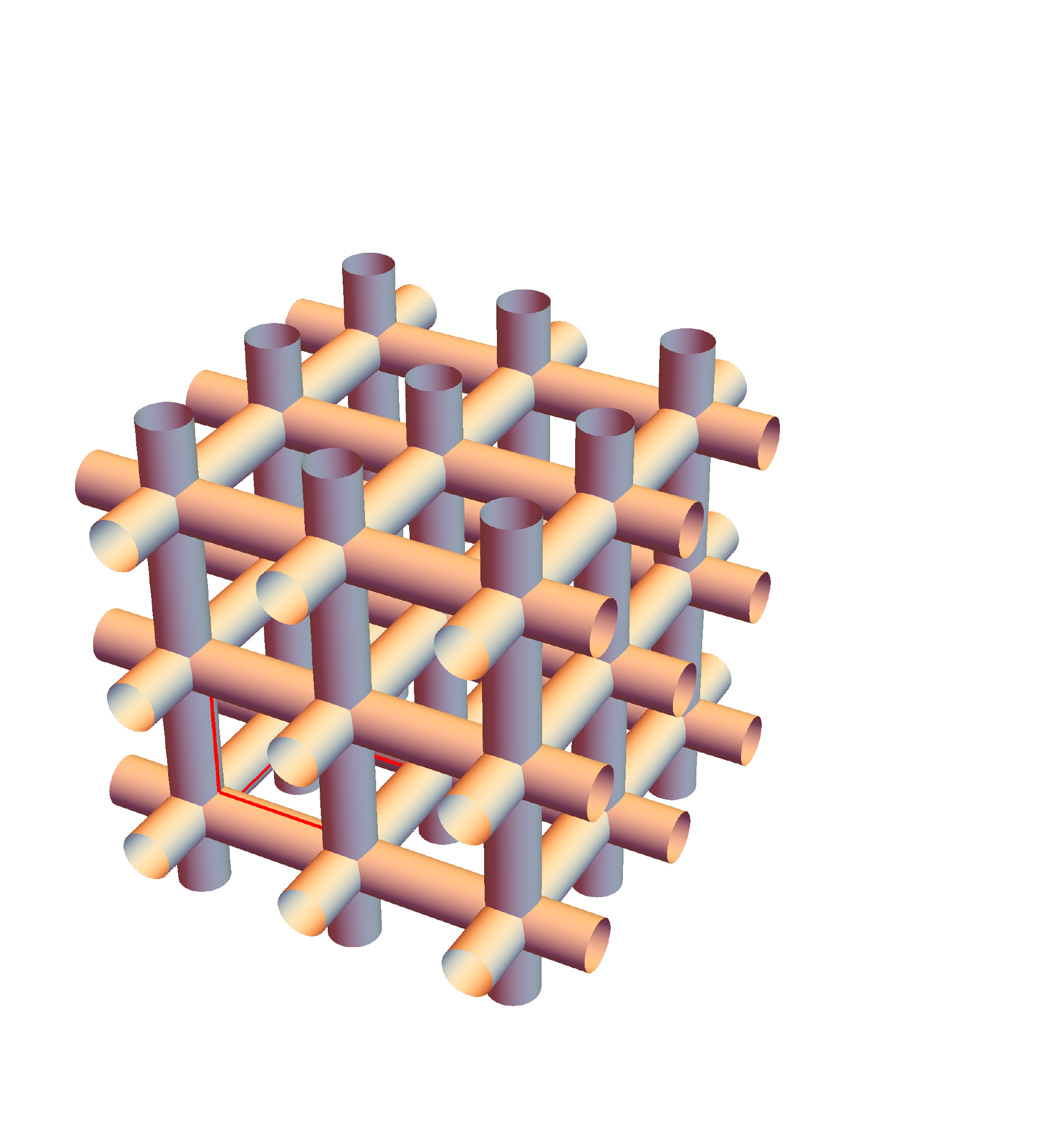}
    \caption{(Part of) the corresponding $M_\Lambda$, with (part of) two plaquette loops (lifted to $\bS_\Lambda$) shown.}\label{3DTubes}
    \end{figure}   

We need to choose a trinion decomposition $\tGamma$ that extends the set of plaquettes. To visualise this, let us think of $\Lambda$ as being embedded in the three dimensional torus $T^3_L=\bbR^3/L\bbZ^3$ in the obvious way, $M_\Lambda$ to be a regular neighbourhood and $\bS_\Lambda=\partial M_\Lambda$. If we take the ``dual lattice'' $\Gamma \subset T_M$ of the lattice $\Lambda$ (with vertices at the centres of the cubes defined by the edges of $\Lambda$ and edges joining vertices whenever two neighbouring cubes share a face), the closure $M_{\Gamma}$ of the complement of $M_\Lambda$ is a regular neighbourhood of $\Gamma$. (In this case $M_{\Gamma}$ is also a handlebody; together with $M_{\Lambda}$ it gives a Heegaard decomposition of $T^3_L$.)

The figure below shows a part of the handlebody $M_{\Gamma}$. Rather than taking the closure of the complement of $M_\Lambda$, we consider a smaller regular neighbourhood of $\Gamma$ for better visualisation. The six red circles correspond to the six plaquettes on six sides of the cube centred at the centre of the sphere (with six holes). They are among the loops in $\tsfP$. Adding the three blue loops on the surface of the sphere gives us remaining loops of the pants decomposition $\tGamma$ that are visible in this part of $\Sigma$.

\begin{figure}[H]
  \includegraphics[scale=0.8,trim=30 30 30 40]{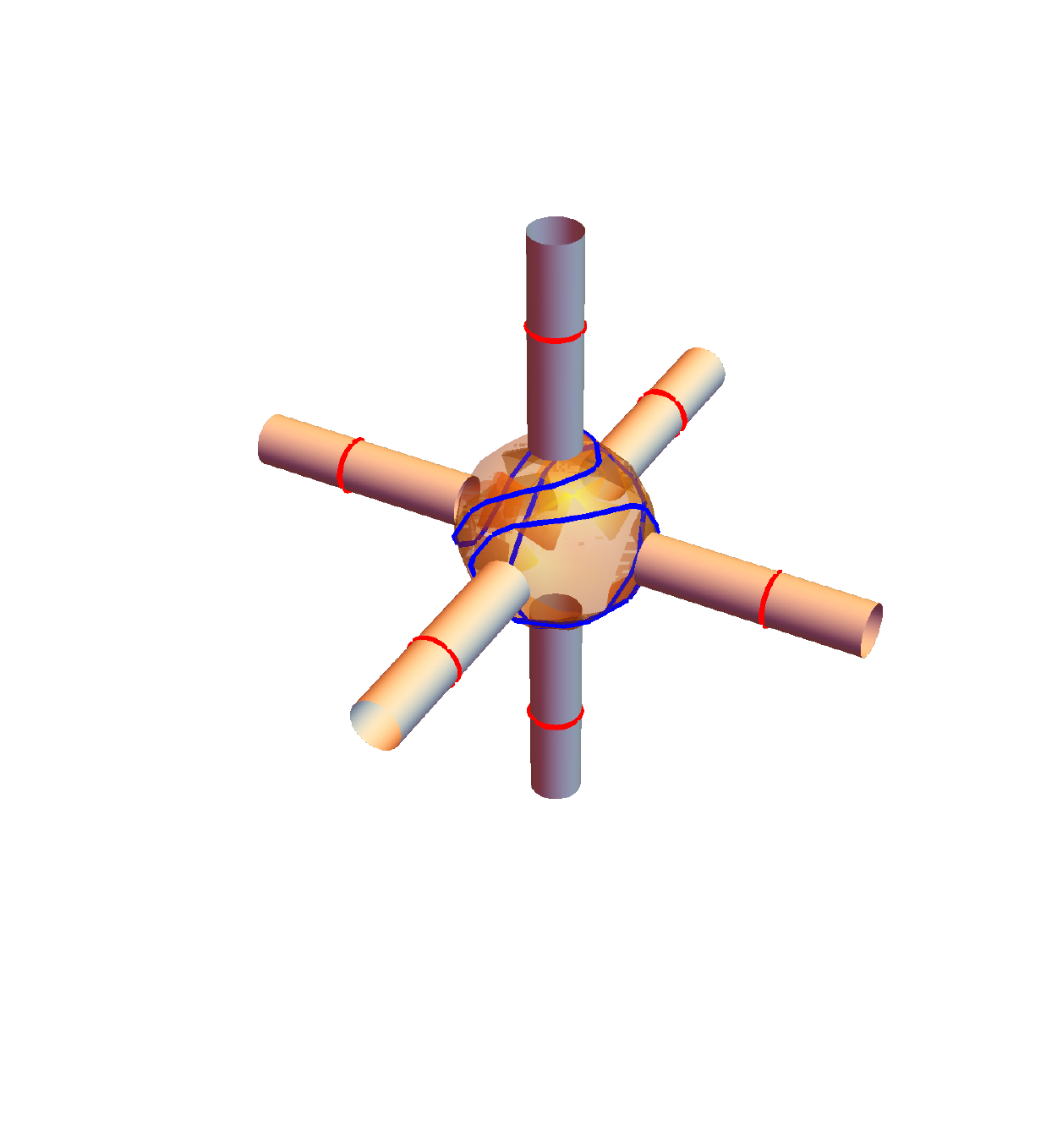}
    \caption{The trinion decomposition (locally at one vertex).}\label{3DTubesTrinionsLocal}
    \end{figure} 
 
The corresponding trinion graph is shown below. The red edges correspond to the red loops (the plaquettes of the lattice theory, lifted to the surface). the blue edges correspond to the loops added on each sphere-with-six-holes.

\begin{figure}[H]
  \includegraphics[scale=0.7,trim=120 200 30 30,clip]{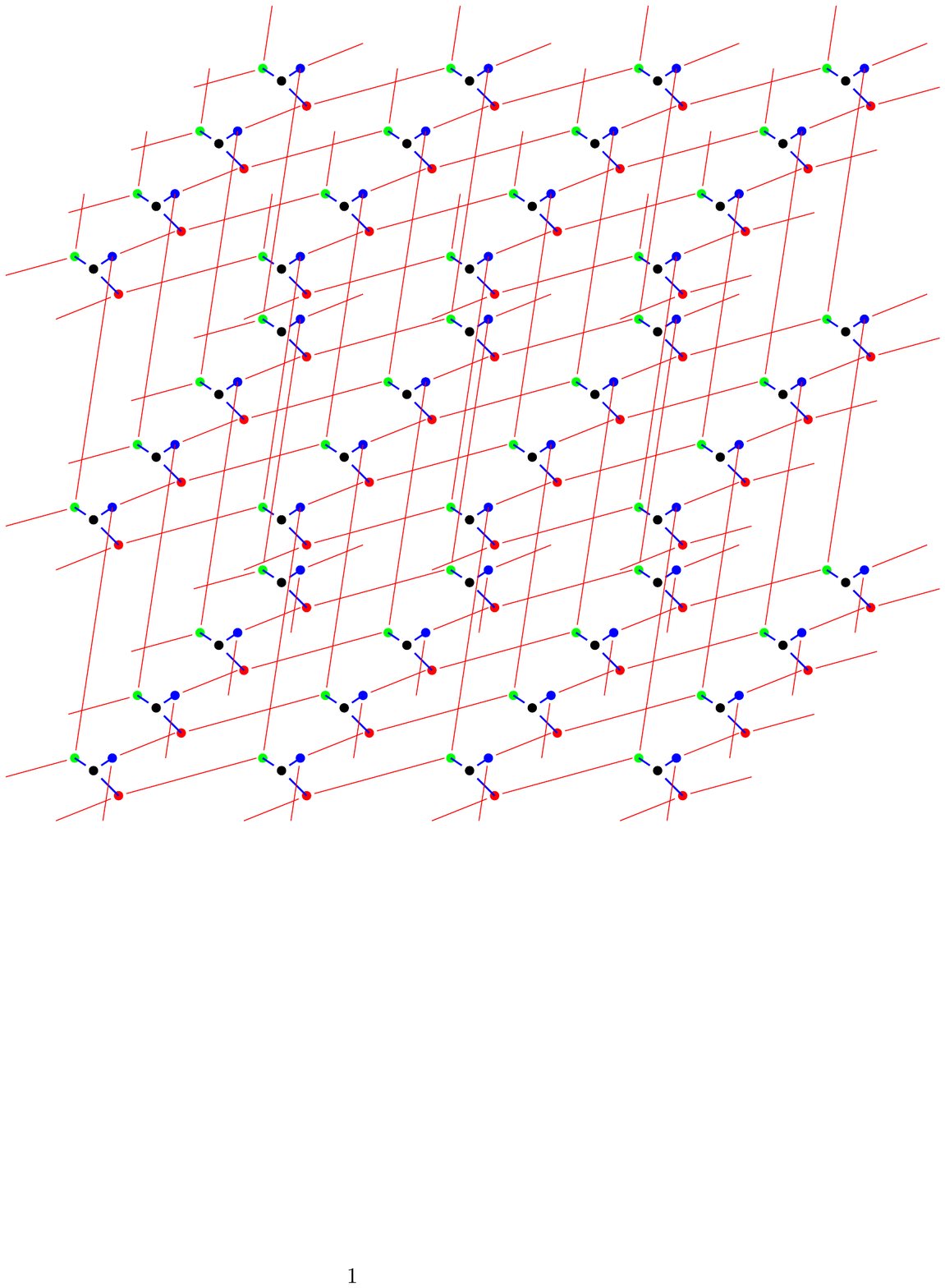}
    \caption{The trinion graph $\cT_\tGamma$.}\label{3DTrinionGraph}
    \end{figure} 

If the blue edges are contracted, we recover a copy of  the dual lattice $\Gamma$:

\begin{figure}[H]
  \includegraphics[scale=0.7,trim=110 200 30 30,clip]{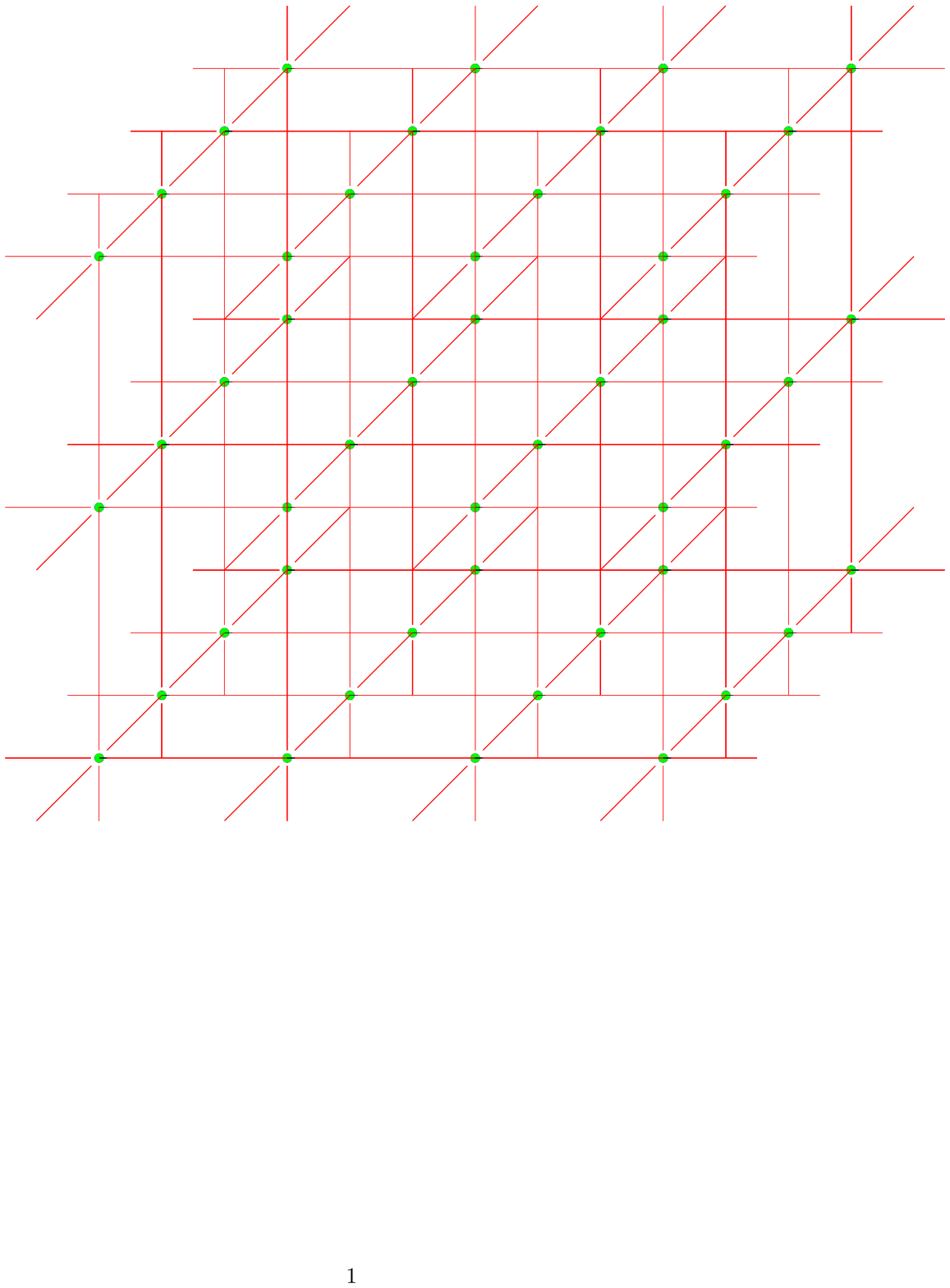}
    \caption{Contracting the blue links of $\cT_\tGamma$ yields $\Gamma$.}\label{3DTrinionGraphContracted}
    \end{figure} 

Before we move on, we recall some notation for easy reference.
\begin{enumerate}
\item $\Lambda$ is the graph (``lattice'') on which lattice gauge-fields live.
\item $M_\Lambda$ is the associated handlebody, and $\bS=\bS_\Lambda=\partial M_\Lambda$.
\item $\sfP$ is a set of loops (``plaquettes'') involved in the lattice path-integral weight.
\item $\tGamma$ is a set of loops in $\bS$ giving a pants decomposition, which we also denote by $\tGamma$. Under the retraction $\tgamma \mapsto \gamma \equiv r \circ \tgamma$, the subset $\tsfP \subset \tGamma$ maps bijectively to $\sfP$.
\end{enumerate}

\begin{remark} Consider the periodic lattice 
$$
\bbZ^3/L\bbZ^3 = \Lambda \subset T^3_L=\bbR^3/L\bbZ^3
$$
Recall that $\Gamma \subset T^3_L$ is the ``dual lattice'' and the pants decomposition $\tGamma$ of $\bS_\Lambda$ has as corresponding handlebody $M_{\Gamma}$, the closure of the complement of $M_\Lambda \subset T^3_L$.   Note that $M_{\Gamma}$ can be naturally identified with the handlebody $M_{\cT_\Gamma}$ associated to the trinion graph $\cT_\Gamma$.
The trinion graph $\cT_\tGamma$ is an alternate spine of $M_{\Gamma}$; contracting the edges of $\cT_\tGamma$ that do not correspond to plaquettes yields $\Gamma$.
\end{remark}

\subsection{``Nonabelian Gauss's Law'' II} Consider the six plaquettes that bound a unit cube in $\Lambda$. For any lattice gauge-field configuration $g$, the corresponding holonomies, with a common base-point and appropriate orientations, give six elements of $SU(2)$. In terms of the graph below, in which seven of the edges have been gauge-fixed to the identity, the holonomies of the six plaquettes, taken from the common base-point $O$, are: 
\[
\begin{split}
\mathtt{h}_1&=\mathtt{g}_2\\
\mathtt{h}_2&=\mathtt{g}_1^{-1}\\
\mathtt{h}_3&=\mathtt{g}_5^{-1}\\
\mathtt{h}'_1&=\mathtt{g}_4^{-1}\mathtt{g}_5\\
\mathtt{h}'_2&=\mathtt{g}_3\\
\mathtt{h}'_3&=\mathtt{g}_2^{-1}\mathtt{g}_3^{-1}\mathtt{g}_4\mathtt{g}_1\\
\end{split}
\]
We have $\mathtt{h}'_3\mathtt{h}_2\mathtt{h}_1'\mathtt{h}_3\mathtt{h}_2'\mathtt{h}_1=Id$, which we recognise as describing the fiundamental group of a sphere with six holes.

\begin{figure}[H]
  \includegraphics[scale=1.5,trim=30 10 50 30]{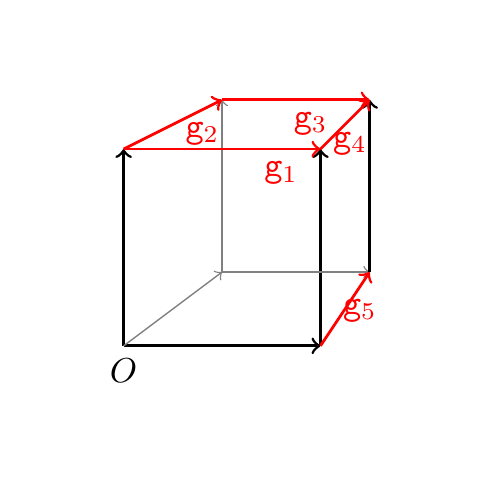}
    \caption{Edges in $\Lambda$ forming a cube. The black edges, which form a maximal tree in the cube, have been gauge-fixed to the identity.}
    \label{Cube}
    \end{figure} 
    
Given six characters $\chi_{l_i}, i=1,2,3,\ \chi'_{l_i}, i=1,2,3$ of $SU(2)$, we have
\[
\begin{split}
&\int \chi_{l'_3} (\mathtt{h}'_3(g))  \chi_{l_2} (\mathtt{h}_2(g)) \chi_{l'_1} (\mathtt{h}'_1(g)) \chi_{l_3} (\mathtt{h}_3(g)) \chi_{l'_2} (\mathtt{h}'_2(g)) \chi_{l_1} (\mathtt{h}_1(g)) \cD g\\
&= \int \chi_{l'_3} (\mathtt{g}_2^{-1}\mathtt{g}_3^{-1}\mathtt{g}_4\mathtt{g}_1)  \chi_{l_2} (\mathtt{g}_1^{-1}) \chi_{l'_1} (\mathtt{g}_4^{-1}\mathtt{g}_5) \chi_{l_3} (\mathtt{g}_5^{-1}) \chi_{l'_2} (\mathtt{g}_3) \chi_{l_1} (\mathtt{g}_2) d\mathtt{g}_1d\mathtt{g}_2 d\mathtt{g}_3 d\mathtt{g}_4 d\mathtt{g}_5\\
&=(\frac{1}{l'_3+1})^4 \delta_{l_1,l_2,l_3,l'_1.l'_2,l'_3} 
\end{split}
\]

The trinion graph, locally around the vertex representing the cube, looks like this:

\begin{figure}[H]
  \includegraphics[scale=.8,trim=140 250 60 60]{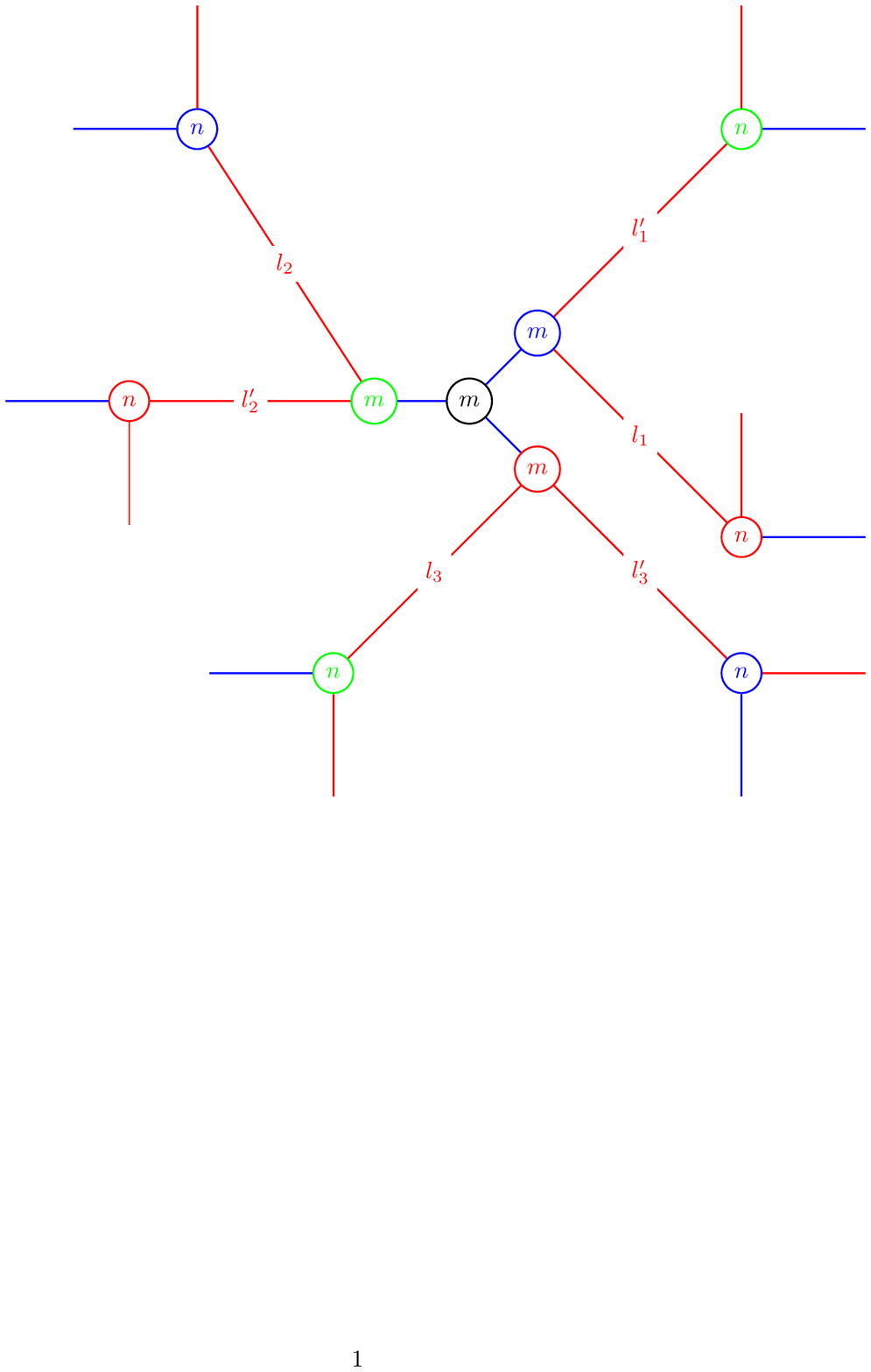}
    \caption{The trinion graph locally around the vertex representing the cube. Labels on the edges and vertices correspond to indices in Equation (\ref{GaussEquation}).}
    \label{TrinionCube}
    \end{figure} 

The integral over $\cM$ (corresponding to the above integral over $\bM$) gives
\begin{equation}\label{GaussEquation}
\begin{split}
&\frac{1}{\sum_{n=0}^\infty (\frac{\sqrt{2}}{(n+1) \pi})^{2(\sfb-1)}}\\
\times &\{
\sum_{n=0}^\infty (\frac{\sqrt{2}}{(n+1) \pi})^{2(\sfb-1)-4}  \sum_{m=0}^\infty (\frac{\sqrt{2}}{(m+1) \pi})^4\\
&\int \hchi_{l_1} (\pi t_1) \hchi_{l_2} (\pi t_2) \hchi_{l_3} (\pi t_3) \hchi_{l'_1} (\pi t'_1) \hchi_{l'_2} (\pi t'_2) \hchi_{l'_3} (\pi t'_3)\\ 
&\ \ \ \ \vu_n(t_1) \vu_m (t_1) \vu_n(t_2) \vu_m (t_2) \vu_n(t_3) \vu_m (t_3)\\ 
&\ \ \ \ \vu_n(t'_1) \vu_m (t'_1) \vu_n(t'_2) \vu_m (t'_2) \vu_n(t'_3) \vu_m (t'_3)\\  
&\ \ \ \  dt_1 dt_2 dt_3 dt'_1 dt'_2 dt'_3\}
\end{split}
\end{equation}

As $\sfb \to \infty$, this tends to
\[
\begin{split}
&(\frac{\pi}{\sqrt{2}})^4 \times \{\sum_{m=0}^\infty (\frac{\sqrt{2}}{(m+1) \pi})^4\\
&\int \hchi_{l_1} (\pi t_1) \hchi_{l_2} (\pi t_2) \hchi_{l_3} (\pi t_3) \hchi_{l'_1} (\pi t'_1) \hchi_{l'_2} (\pi t'_2) \hchi_{l'_3} (\pi t'_3)\\ 
&\ \ \ \ \ \hchi_m (\pi t_1)   \hchi_m (\pi t_2)   \hchi_m (\pi t_3)\hchi_m (\pi t'_1)   \hchi_m (\pi t'_2)   \hchi_m (\pi t'_3)\\  
&\ \ \ \  \tdt_1 \tdt_2 \tdt_3 \tdt'_1 \tdt'_2 \tdt'_3\}\\
&=(\frac{1}{l'_3+1})^4 \delta_{l_1,l_2,l_3,l'_1.l'_2,l'_3} 
\end{split}
\]

\subsection{Symplectic partition function with the Wilson action} Refer to the cubical lattice in Figure \ref{3DTrinionGraphContracted}. This is the ``contracted trinion graph'', $\Gamma$. The set of vertices of $\Gamma$ is denoted $\bbV_\Gamma$ and the set of (unoriented) edges $\bbV_\Gamma$.

Each vertex has an associated index $n_v$. Then the partition function of the symplectic theory is
\begin{equation}\label{partitionWilson}
\begin{split}
Z_\be = 
&\sum_{n_v=0}^\infty \prod_{v \in \bbV_\Gamma}  (\frac{\sqrt{2}}{\pi (n_v+1)})^4\\ &\times \prod_{e \in \bbE_\Gamma} \int_0^1 \vu_{n_{v_1(e)}}(t_e) \vu_{n_{v_2(e)}}(t_e) \exp{\{2 (\frac{2}{\be^2}) \cos{\pi t_e}\}} \ dt_e
\end{split}
\end{equation}
where $v_1(e)$ and $v_2(e)$ are the two vertices at either end of the (unoriented) edge $e$.

Recall that $\vu_l(t)=\sqrt{2} \hchi_l(\pi t) \sin{\pi t}$ and $2\cos{\pi t}=\hchi_1(\pi t)$. We will also use the relations:
\begin{enumerate}[label=(\alph*)]
\item $\hchi_p \hchi_q=\hchi_{|p-q|}+\hchi_{|p-q|+2}+\dots+\hchi_{p+q}\equiv \tsum_{r=|p-q|}^{|p+q|} \hchi_r$, where we have introduced the notation $\tsum$ for the sum that skips every alternate term in the usual sum $\sum$.
\item $\hchi_1^p=\sum_{k=0}^{\lfloor p/2 \rfloor} N_{p,k} \hchi_{p-2k}$, where 
$N_{p,k}={}^nC_{n-k}-{}^nC_{n-k-1}$ are multiplicities, and hence non-negative integers. 
\end{enumerate}

With all this, the contribution of an edge $e$ evaluates to:
\[
\begin{split}
&\int_0^1 \vu_{n_{v_1(e)}}(t_e) \vu_{n_{v_2(e)}}(t_e) \exp{\{2 (\frac{2}{\be^2}) \cos{\pi t_e}\}}\ dt_e\\
&=\tsum_{l_e=|n_{v_1(e)}-n_{v_2(e)}|}^{n_{v_1(e)}+n_{v_2(e)}}\ \ \sum_{m_e=0}^\infty \frac{(\frac{2}{\be^2})^{m_e}} {m_e!} \int_0^1  \hchi_{l_e}(\pi t_e) \hchi_1^{m_e} (\pi t_e) \tdt_e \\
&=\tsum_{l_e=|n_{v_1(e)}-n_{v_2(e)}|}^{n_{v_1(e)}+n_{v_2(e)}}\ \ \sum_{m_e=0}^\infty \frac{(\frac{2}{\be^2})^{m_e}} {m_e!} \int_0^1  \hchi_{l_e}(\pi t_e) \sum_{k_e=0}^{\lfloor m_e/2 \rfloor} N_{m_e,k_e} \hchi_{m_e-2k_e} (\pi t_e) \tdt_e \\
&= \tsum_{l_e=|n_{v_1(e)}-n_{v_2(e)}|}^{n_{v_1(e)}+n_{v_2(e)}} \sum_{m_e=0}^\infty \frac{(\frac{2}{\be^2})^{m_e}}{m_e!} \sum_{k_e=0}^{\lfloor m_e/2 \rfloor} N_{m_e,k_e} \delta_{l_e,m_e-2k_e}\\
&= \tsum_{l_e=|n_{v_1(e)}-n_{v_2(e)}|}^{n_{v_1(e)}+n_{v_2(e)}} \sum_{k_e=0}^\infty \frac{{(\frac{2}{\be^2})^{l_e+2k_e}}}{(l_e+2k_e)!}   N_{l_e+2k_e,k_e} \\
\end{split}
\]

With this, the partition function evaluates to:
\begin{equation}\label{partitionWilson}
\begin{split}
Z_\be = 
&\sum_{n_v=0}^\infty \prod_{v \in \bbV_\Gamma}  (\frac{\sqrt{2}}{\pi (n_v+1)})^4\\ &\times \prod_{e \in \bbE_\Gamma} \{ \tsum_{l_e=|n_{v_1(e)}-n_{v_2(e)}|}^{n_{v_1(e)}+n_{v_2(e)}} \sum_{k_e=0}^\infty \frac{ {(\frac{2}{\be^2})^{l_e+2k_e}} }{(l_e+2k_e)!}   N_{l_e+2k_e,k_e}\} \\
\end{split}
\end{equation}
(Recall that $v_1(e)$ and $v_2(e)$ are the two vertices at either end of the (unoriented) edge $e$.) This has the advantage that it is a sum of positive terms.

\subsection{The Migdal action} The weight
\[
\exp{\{{\frac{2}{\be^2} \sum_{plaquettes\ P}  W_P(g) }\}}=\prod_{plaquettes\ P} \exp{\{{\frac{2}{\be^2}  W_P(g) }\}}
\]
in the lattice path integral has the following qualitative behaviour: as $\be \to \infty$ the weight function tends to (the constant function) 1, so that the measure tends to $\cD g$, and as $\be \to 0$, the (normalised probability) measure concentrates around (gauge-equivalence class of) the gauge-field configuration $g_l=e$ for every oriented link $l$. The Migdal recipe  replaces the weight with
\[
\prod_{plaquettes\ P} \{\sum_{l_P=0}^\infty \exp{\{-\frac{l_P(l_P+1){\be'}^2}{2}\}}(l_P+1) \chi_{l_P}(H_P(g))\}
\]
which preserves the qualitative behaviour. The ``bare coupling constants'' $\be,\be'$ will have a complicated relationship depending on normalisation. In $d=2$, this replacement makes the lattice theory explicitly solvable on an arbitrary surface (\cite{Witten}). 

\subsection{Symplectic partition function with the Migdal action}  With the Migdal action the contribution of an edge $e$ is:
\[
\begin{split}
&\int_0^1 \vu_{n_{v_1(e)}}(t_e) \vu_{n_{v_2(e)}}(t_e) \{ \sum_{0}^\infty \exp{\{-\frac{l_e(l_e+1){\be'}^2}{2}\}}(l_e+1) \hchi_{l_e}(\pi t_e) \}  dt_e\\
&=\sum_{l_e} \exp{\{-\frac{l_e(l_e+1)}{2{\be'}^2}\}}(l_e+1)
\int_0^1 \hchi_{n_{v_1(e)}}(\pi t_e)  \hchi_{n_{v_2(e)}}(\pi t_e) \hchi_{l_e}(\pi t_e)  \tdt_e\\
&=\tsum_{l_e=|n_{v_1(e)}-n_{v_2(e)}|}^{n_{v_1(e)}+n_{v_2(e)}} \exp{\{-\frac{l_e(l_e+1){\be'}^2}{2}\}}(l_e+1)
\end{split}
\]

This gives the expression for the symplectic partition function with Migdal action:
\[
\begin{split}
Z'_{\be'} = 
&\sum_{\substack{n_v=0 \\ v \in  \bbV_\Gamma}}^\infty \ \ \prod_{v \in \bbV_\Gamma}  (\frac{\sqrt{2}}{\pi (n_v+1)})^4\\
&\times \prod_{e \in \bbE_\Gamma}  \{\tsum_{l_e=|n_{v_1(e)}-n_{v_2(e)}|} ^{n_{v_1(e)}+n_{v_2(e)}} \exp{\{-\frac{l_e(l_e+1){\be'}^2}{2}\}}(l_e+1)
\}
\end{split}
\]
This too is a sum of positive terms. 

\subsection{Plaquette-plaquette correlations} We can also write down expressions involving plaquette operators. For example, let $P_a,P_b$ be two plaquettes represented by edges $e_a,e_b$ in the trinion graph. Then, the symplectic plaquette-plaquette correlation with the Migdal action is
\[
\begin{split}
<W_{P_a}W_{P_b}>' &= \frac{1}{Z'_{\be'}} \sum_{\substack{n_v=0 \\ v \in  \bbV_\Gamma}}^\infty \ \ \prod_{v \in \bbV_\Gamma}  (\frac{\sqrt{2}}{\pi (n_v+1)})^4\\
&\times \prod_{\substack{e \in  \bbE_\Gamma\\ e \neq e_a,e_b}} \{\tsum_{l_e=|n_{v_1(e)}-n_{v_2(e)}|} ^{n_{v_1(e)}+n_{v_2(e)}} \exp{\{-\frac{l_e(l_e+1){\be'}^2}{2}\}}(l_e+1)
\}\\
\times \Bigg\{\{\tsum_{|n_{v_1({e_a})}-n_{v_2({e_a})}|}^{n_{v_1({e_a})}+n_{v_2({e_a})}} &\exp{\{-\frac{l_{e_a}(l_{e_a}+1){\be'}^2}{2}\}}(l_{e_a}+1)\}+\Delta_{n_{v_1(e_a)}+n_{v_2(e_a)}}-\Delta_{|n_{v_1(e_a)}-n_{v_2(e_a)}|}\Bigg\}
\\
\times \Bigg\{\{\tsum_{|n_{v_1({e_b})}-n_{v_2({e_b})}|}^{n_{v_1({e_b})}+n_{v_2({e_b})}} &\exp{\{-\frac{l_{e_b}(l_{e_b}+1){\be'}^2}{2}\}}(l_{e_b}+1)\}+\Delta_{n_{v_1(e_b)}+n_{v_2(e_b)}}-\Delta_{|n_{v_1(e_b)}-n_{v_2(e_b)}|}\Bigg\}
\end{split}
\]
We have used the computation:
\[
\begin{split}
&\int_0^1 \vu_{n_{v_1(e)}}(t_e) \vu_{n_{v_2(e)}}(t_e) \{ \sum_{l_e} \exp{\{-\frac{l_e(l_e+1){\be'}^2}{2}\}}(l_e+1) \hchi_{l_e}(\pi t_e) \}\cos{(\pi t_e)}  dt_e\\
&=\frac{1}{2} \sum_{l_e} \exp{\{-\frac{l_e(l_e+1)}{2{\be'}^2}\}}(l_e+1)
 \int_0^1 \hchi_{n_{v_1(e)}}(\pi t_e)  \hchi_{n_{v_2(e)}}(\pi t_e) \hchi_{l_e}(\pi t_e)  \hchi_1(\pi t_e) \  \tdt_e\\
 &=\frac{1}{2}\Bigg\{\tsum_{|n_{v_1(e)}-n_{v_2(e)}|-1}^{l_e \le n_{v_1(e)}+n_{v_2(e)}-1} \exp{\{-\frac{l_e(l_e+1){\be'}^2}{2}\}}(l_e+1)\\&+\tsum_{|n_{v_1(e)}-n_{v_2(e)}|+1}^{l_e \le n_{v_1(e)}+n_{v_2(e)}+1} \exp{\{-\frac{l_e(l_e+1){\be'}^2}{2}\}}(l_e+1)\Bigg\}\\
 &=\Bigg\{\tsum_{l_e=|n_{v_1(e)}-n_{v_2(e)}|}^{n_{v_1(e)}+n_{v_2(e)}} \exp{\{-\frac{l_e(l_e+1){\be'}^2}{2}\}}(l_e+1)\}+\Delta_{n_{v_1(e)}+n_{v_2(e)}}-\Delta_{|n_{v_1(e)}-n_{v_2(e)}|}\Bigg\}\\
\end{split}
\]
where
\[
\Delta_{l} \equiv \frac{1}{2}\{\exp{\{-\frac{(l+1)(l+2){\be'}^2}{2}\}}(l+2)-\exp{\{-\frac{l(l+1){\be'}^2}{2}\}}(l+1)\}
\]


\subsection{Symplectic analogue of a confinement criterion, with the Migdal action} Unlike the case $d=2$, it seems difficult to deal with the Wilson loop, since the natural pants decomposition extending plaquettes does not have large loops. We can instead consider, following \cite{TomboulisYaffe}, the analogue of the magnetic flux free energy. 

First, we allow our lattice to have different periodicities in the ``time'' and ``space'' directions -- respectively, $L_t,L_s$ -- so that the lattice has $N=L_s^2L_t$ vertices. We consider the modified symplectic partition function
\[
\begin{split}
\tZ'_{{\be'}} \equiv 
&\sum_{\substack{n_v=0 \\ v \in  \bbV_\Gamma}}^\infty \ \ \prod_{v \in \bbV_\Gamma}  (\frac{\sqrt{2}}{\pi (n_v+1)})^4\\
&\times \prod_{e \in \bbE_\Gamma \setminus \bbE_\ell}  \{\tsum_{l_e=|n_{v_1(e)}-n_{v_2(e)}|}^{n_{v_1(e)}+n_{v_2(e)}} \exp{\{-\frac{l_e(l_e+1){\be'}^2}{2}\}}(l_e+1)\}\\
&\times \prod_{e \in \bbE_\ell}  \{\tsum_{l_e=|n_{v_1(e)}-n_{v_2(e)}|}^{n_{v_1(e)}+n_{v_2(e)}} \exp{\{-\frac{l_e(l_e+1){\be'}^2}{2}\}}(l_e+1)\}^{-1}
\end{split}
\]
Here $\ell$ is a subgraph (with $L_t$ vertices and $L_t$ edges) running along the ``time'' axis. The symplectic analogue of the confinement criterion in \cite{TomboulisYaffe} is the following:
\[
\lim_{L_s \to \infty} \frac{\tZ'_{{\be'}}}{Z'_{{\be'}}}= 1
\]

Work is under way to analyse the above expressions, and also to cover the case $d=4$. As for the inclusion of fermions, it is here that a choice of complex structure on $\Sigma$ might be necessary.

\section{Further remarks}

\subsection{An alternate route to comparing integration on $\bM$ and $\cM$} Let $\Lambda$ be an arbitrary finite connected graph as at the beginning, $\bS$  the associated oriented surface, and $r:\bS \to \Lambda$ the retraction to the spine. One can choose an \emph{embedding} $\iota:\Lambda \hookrightarrow \bS$ which is a section for the retraction $r$. This gives a map $\iota^*:\cM \to \bM$, which is a retraction for the natural embedding $\bM \hookrightarrow \cM$, and one can hope to compare $\cD g$ and the push-forward of $d\mu_\cL$ by $\iota^*$. In fact, one can show, using Goldman flows, that the push-forward is independent of the choice of $\iota$. Further,
\begin{equation}\label{extend}
W_\tgamma=W_\gamma \circ \iota^*
\end{equation}
(Compare with Equation (\ref{restrict})). Unfortunately, given a family of plaquettes in $\Lambda$, it is in general impossible to deform their images by $\iota$ so that they become mutually disjoint loops in $\bS$.

\subsection{$U(1)$ lattice gauge theory}\label{U(1)Theory} In fact, the above strategy works for $U(1)$. The space of flat $U(1)$ connections on $\bS$ modulo $U(1)$ gauge transformations can be identified with the $2\sfb$-dimensional (real) torus $\cJ_{2\sfb}=H^1(\bS,\bbR)/H^1(\bS,\bbZ)$. If we choose an orientation on $\bS$, $\cJ$ is a symplectic manifold. The space of $U(1)$  lattice gauge fields on $\Lambda$, modulo lattice gauge transformations, on the other hand, is the $\sfb$-dimensional  subtorus $J_{\sfb}=H^1(J,\bbR)/H^1(J,\bbZ)$. If we embed $\Lambda$ in $\bS$, we get a retraction $\iota^*:\cJ_{2\sfb} \to J_{\sfb}$.  Normalise measures to total volume 1. In contrast to the case of $SU(2)$, the measure $\cD g$ coincides with the  push-forward of $d\mu_\cL$ by $\iota^*:\cJ_{2\sfb} \to J_{\sfb}$. This is easy to see -- both measures are invariant under translations on the torus $J_{\sfb}$. 

Consider briefly the case of $U(1)$ lattice theory in $d=3$. There are $3N$ plaquette loops.  Each plaquette inherits an orientation from the natural orientation in $\bbR^3$. Holonomies being gauge-invariant in this case, the set of holonomy $u_P$ for each (oriented) plaquette $P$, together with three ``global'' loops $u_x,u_y,u_z$ running along the three directions, together yield the analogue of the map $\bt_\bM$ -- in this case an injection
\[
\bu:J_{\sfb}=J_{2N+1}  \to U(1)^{3N+3}
\]
We have the following $N$ conditions on the image: for each unit cube $C$ in $\Lambda$:
\[
\prod_{P\ a\ face\ of\ C} u_P^{\pm 1}=1
\]
where the sign is chosen appropriately for each of he six faces of $C$. The other two conditions are that the products of holonomies for plaquettes in any $xy$-plane and in any $yz$-plane are both trivial. (This implies triviality of the product in any $zx$-plane.)

 \appendix

\section{Bernoulli Polynomials}

We follow the conventions in Hans Rademacher: \textit{Topics in analytic number theory}. For a short, clear account see: Omran Kouba: \textit{Lecture Notes, Bernoulli Polynomials and Applications}:\texttt{https://arxiv.org/abs/1309.7560}.

The Bernoulli polynomials $B_0,B_1,\dots,$ can be defined by the conditions
\begin{enumerate}
\item $B_0=1$
\item $B'_{q}(x)=qB_{q-1}(x)$
\item $\int_0^1 B_q(x) dx = 0$
\end{enumerate}
The Bernoulli number $b_k$ is defined by $b_k=B_k(0)$. 

It can be shown that

\begin{enumerate}
\item $B_1(x)=x-1/2$, $B_2(x)=x^2-x+1/6$, and more generally $B_q$ is a degree $q$ polynomial with rational coefficients.
\item $B_{q+1}(x+1)-B_{q+1}(x)=(q+1)x^{q}$, for $q \ge 0$.
\item On the interval $[0,1)$ we have:
$$
B_{2\sfb-1}(x)=(-1)^{\sfb} \frac{2(2\sfb-1)!}{(2\pi)^{2\sfb-1}} \sum_{k=1}^\infty \frac{\sin(2 \pi k x)}{k^{2\sfb-1}} \ \textup{for $\sfb>1$}
$$
and
$$
B_{2\sfb}(x)=(-1)^{\sfb+1}\frac{2(2\sfb)!}{(2\pi)^{2\sfb}} \sum_{k=1}^\infty \frac{\cos(2 \pi k x)}{k^{2\sfb}}\  \textup{for $\sfb\ge 1$} \ .
$$
\item As a consequence, on $[0,1)$ and hence everywhere
\begin{equation*}
\begin{split}
B_{2\sfb-1}(1-x)&=(-1)^\sfb \frac{2(2\sfb-1)!}{(2\pi)^{2\sfb-1}} \sum_{k=1}^\infty \frac{\sin(2 \pi k -2 \pi k x)}{k^{2\sfb-1}}\\
&=-(-1)^\sfb \frac{2(2\sfb-1)!}{(2\pi)^{2\sfb-1}} \sum_{k=1}^\infty \frac{\sin(2\pi k x)}{k^{2\sfb-1}}\\
&=-B_{2\sfb-1}(x)
\end{split}
\end{equation*}
and similarly
\begin{equation*}
B_{2\sfb}(1-x)=B_{2\sfb}(x)
\end{equation*}

\end{enumerate}

In the topology and algebraic geometry literature``Bernoulli Polynomials'' $P_q$ are used, which are related to the $B_k$ by
$$
q! P_q= B_q
$$

\section{A continuous version of the $SU(2)$ Verlinde Algebra}\label{trinionkernel}

If $t$ is the coordinate on $[0,1]$, we set $\tdt=2\sin^2{\pi t}\ dt$.

Let $\cP$ denote polytope in $\bbR^3$,  consisting of points $(t_1,t_2,t_3)$ subject the inequalities:
\begin{enumerate}
\item $0 \le t_i \le1,\ i=1,2,3$,
\item $t_1+t_2+t_3 \le 2$, and
\item $|t_2-t_3| \le t_1 \le t_2+t_3$.
\end{enumerate}
These conditions are invariant under permutation, being the conditions that $\pi t_1, \pi t_2, \pi t_3$ form the (angles subtended at the origin by) sides of a spherical triangle on a sphere of radius 1.

Fixing $0 \le t_3 \le 1$, we get the planar section $\cP_{t_3}$ of $\cP$ defined by the inequalities:
\begin{enumerate}
\item $0 \le t_i \le1,\ i=1,2$
\item $t_3 \le t_1+t_2 \le 2-t_3$
\item $|t_1-t_2| \le t_3$
\end{enumerate}

\begin{tikzpicture}
	\begin{axis}[ymin=0,ymax=1,enlargelimits=false,xlabel=$t_1$,ylabel=$t_2$]
	\addplot
	table {
		x y 
		0.2 0 
		0 0.2
		0.8 1
		1 0.8	
		0.2 0 	
	};
	 \node at (50,50) {$\cP_{0.2}$};
	\end{axis}
\end{tikzpicture}

Note that $area(\cP_{t_3})=2t_3(1-t_3)$, so that
\[
vol(\cP)=\int_0^1 2t_3(1-t_3) dt_3=t_3^2-\frac{2}{3}t_3^3 |^1_0=\frac{1}{3}
\]

Given an integrable real-valued function $f$ on $[0,1]$, we get a new function $N_{t_3}f$ by convolving with the indicator function of $\cP_{t_3}$:
\[
N_{t_3}f (t_2)=\int \mathbf{1}_{\cP_{t_3}} (t_2,t_1) f(t_1) dt_1\\
\]
Since the kernel is real, symmetric and bounded, this defines a self-adjoint Hilbert Schmidt operator $N_{t_3}:L^2([0,1]) \to L^2([0,1])$. By definition, given integrable functions $f_i,\ i=1,2,3$, we have
\[
\int_0^1 dt_1 f_3(t_3) (f_2, N_{t_3}f_1)_{L^2([0,1])}=\int \mathbf{1}_{\cP} (t_3,t_2,t_1) f_1(t_3)f_2(t_2)f(_1t_1) dt_3 dt_2 dt_3\
\]

\begin{proposition} A (normalised) eigenbasis  and corresponding eigenvalues  of $N_{t_3}$ are
\[
(\vu_n(t_1) \equiv \sqrt{2} \sin{(n+1) \pi t_1},\ \frac{2}{(n+1) \pi} \sin{(n+1) \pi t_3}),\ \ \ n=0,1, \dots
\]
\end{proposition}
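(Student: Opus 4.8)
The plan is to compute $(N_{t_3}\vu_n)(t_2)$ by hand, recognize the outcome as the asserted eigenrelation, and then invoke spectral theory to conclude these exhaust the spectrum. Fix $n\ge 0$ and write $k=n+1$. Since $\cP$, and hence $\cP_{t_3}$, is symmetric under interchange of its two arguments, unwinding the definition gives
\[
(N_{t_3}\vu_n)(t_2)=\sqrt{2}\int_0^1 \mathbf{1}_{\cP_{t_3}}(t_1,t_2)\,\sin(k\pi t_1)\,dt_1 .
\]
First I would identify, for fixed $t_2,t_3\in[0,1]$, the set of $t_1$ with $(t_1,t_2,t_3)\in\cP$. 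The three families of inequalities defining $\cP$ are all one-sided bounds on $t_1$, so their intersection is the interval from $\max(0,|t_2-t_3|)=|t_2-t_3|$ to $\min(1,\,t_2+t_3,\,2-t_2-t_3)=\min(t_2+t_3,\,2-t_2-t_3)$. A one-line check shows this interval is never empty (indeed $|t_2-t_3|\le 2-t_2-t_3$ on $[0,1]^2$) and that its upper endpoint is automatically $\le 1$. Integrating $\sin(k\pi t_1)$ over it yields
\[
(N_{t_3}\vu_n)(t_2)=\frac{\sqrt{2}}{k\pi}\Big[\cos\!\big(k\pi|t_2-t_3|\big)-\cos\!\big(k\pi\min(t_2+t_3,\,2-t_2-t_3)\big)\Big].
\]

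The only point requiring any care is that both cosines are blind to the case analysis. Since cosine is even, $\cos(k\pi|t_2-t_3|)=\cos(k\pi(t_2-t_3))$; and since $k$ is an integer, $\cos\!\big(k\pi(2-t_2-t_3)\big)=\cos\!\big(2k\pi-k\pi(t_2+t_3)\big)=\cos\!\big(k\pi(t_2+t_3)\big)$, so the second term equals $\cos(k\pi(t_2+t_3))$ regardless of which of $t_2+t_3$, $2-t_2-t_3$ realizes the minimum. Applying $\cos(A-B)-\cos(A+B)=2\sin A\sin B$ with $A=k\pi t_2$, $B=k\pi t_3$ collapses the bracket and gives
\[
(N_{t_3}\vu_n)(t_2)=\frac{2}{(n+1)\pi}\sin\!\big((n+1)\pi t_3\big)\,\vu_n(t_2),
\]
which is the asserted eigenrelation.

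Finally I would close with spectral theory: $N_{t_3}$ is a self-adjoint Hilbert Schmidt operator on $L^2([0,1])$ (its kernel is real, symmetric, and bounded), and $\{\vu_n\}_{n\ge 0}$ is the standard Fourier sine orthonormal basis of $L^2([0,1])$; having exhibited every basis vector as an eigenvector, we have produced a complete eigenbasis together with all eigenvalues and multiplicities (noting that for special $t_3$ several of the stated eigenvalues may coincide or vanish, which is harmless). I do not anticipate a genuine obstacle here: the computation is elementary, the only subtlety being the identity $\cos(2k\pi-x)=\cos x$ used to merge the two forms of the upper limit of integration. As a byproduct, expanding $\mathbf{1}_{\cP_{t_3}}(\cdot,t_2)$ in this eigenbasis and then reassembling the kernel of $\cP$ recovers the $L^2$-identity (\ref{trinionpolytope}).
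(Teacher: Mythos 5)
Your proposal is correct and follows essentially the same route as the paper: both verify directly that the Fourier sine basis $\{\vu_n\}$ consists of eigenfunctions of $N_{t_3}$ by integrating $\sin((n+1)\pi t_1)$ over the slice of $\cP$ and applying the product-to-sum identity, with completeness of the basis closing the argument. The only difference is cosmetic: the paper splits the computation into three cases ($t_2\le t_3$, $t_3\le t_2\le 1-t_3$, $1-t_3\le t_2$, after reducing to $t_3\le\tfrac12$), whereas you collapse them into one by writing the integration interval as $[\,|t_2-t_3|,\ \min(t_2+t_3,\,2-t_2-t_3)\,]$ and using the evenness and $2\pi$-periodicity of cosine, which is a clean and valid streamlining.
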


\begin{remark} Note that the eigenbasis is independent of $t_3$. Note also that zero is an eigenvalue\footnote{and in that case it is of infinite multiplicity.} iff $t_3$ is rational.
\end{remark}

\begin{proof} The set $\{\vu_n|n=0,1,\dots,\}$ is an orthonormal basis for $L^2([0,1])$, where $\vu_n(t_1) \equiv \sqrt{2} \sin\{{(n+1)\pi t_1}\}$. We check by direct computation that these are eigenfunctions. 

Using the symmetry, we can suppose that $t_3 \le \frac{1}{2}$. We have three cases to consider:
\begin{enumerate}[wide, labelwidth=!, labelindent=0pt] 

\item If $t_2 \le t_3$, we have 
\[
\begin{split}
N_{t_3}\vu_n (t_2) &=\int_{t_3-t_2}^{t_3+t_2}  \sqrt{2} \sin{(n+1)\pi t_1} dt_1\\
&= \sqrt{2} \frac{1}{(n+1)\pi} [-\cos{(n+1)\pi t_1}]_{t_3-t_2}^{t_3+t_2}\\
&= \sqrt{2} \frac{1}{(n+1)\pi} (-\cos{(n+1)\pi (t_3+t_2)}+\cos{(n+1)\pi (t_3-t_2)})\\
&= \sqrt{2} \frac{2}{(n+1)\pi} \sin{(n+1)\pi t_3} \sin {n \pi t_2}\\
&= \frac{2}{(n+1) \pi} \sin{(n+1) \pi t_3}  \sqrt{2} \sin{(n+1)\pi t_2}\\
&= \frac{2}{(n+1)\pi} \sin{(n+1)\pi t_3} \ \vu_n(t_2)\\
\end{split}
\]
\item If $t_3 \le t_2 \le 1-t_3$, we have 
\[
\begin{split}
N_{t_3}\vu_n (t_2) &=\int_{t_2-t_3}^{t_3+t_2}  \sqrt{2} \sin{(n+1)\pi t_1} dt_1\\
&= \sqrt{2} \frac{1}{(n+1)\pi} [-\cos{(n+1)\pi t_1}]_{t_2-t_3}^{t_3+t_2}\\
&= \sqrt{2} \frac{1}{(n+1)\pi} (-\cos{(n+1)\pi (t_3+t_2)}+\cos{(n+1)\pi (t_2-t_3)})\\
&= \sqrt{2} \frac{2}{(n+1)\pi} \sin{(n+1)\pi t_3} \sin{(n+1)\pi t_2}\\
&= \frac{2}{(n+1)\pi} \sin{(n+1)\pi t_3}  \sqrt{2} \sin{(n+1)\pi t_2}\\
&= \frac{2}{(n+1)\pi} \sin{(n+1)\pi t_3} \ \vu_n(t_2)\\
\end{split}
\]
\item And finally, if $1-t_3 \le t_2$, we have 
\[
\begin{split}
N_{t_3}\vu_n (t_2) &=\int_{t_2-t_3}^{2-t_3-t_2}  \sqrt{2} \sin{(n+1)\pi t_1} dt_1\\
&= \sqrt{2} \frac{1}{(n+1)\pi} [-\cos{(n+1)\pi t_1}]_{t_2-t_3}^{2-t_3-t_2}\\
&= \sqrt{2} \frac{1}{(n+1)\pi} (-\cos{(n+1)\pi (2-t_3-t_2)}+\cos{(n+1)\pi (t_2-t_3)})\\
&= \sqrt{2} \frac{1}{(n+1)\pi} (-\cos{(n+1)\pi (t_3+t_2)}+\cos{(n+1)\pi (t_2-t_3)})\\
&= \sqrt{2} \frac{2}{(n+1)\pi} \sin{(n+1)\pi t_3} \sin{(n+1)\pi t_2}\\
&= \frac{2}{(n+1)\pi} \sin{(n+1)\pi t_3}  \sqrt{2} \sin{(n+1)\pi t_2}\\
&= \frac{2}{(n+1)\pi} \sin{(n+1)\pi t_3} \ \vu_n(t_2)\\
\end{split}
\]
\end{enumerate}

\end{proof}

\begin{remark} The previous proposition can be rewritten:
\[
N_{t_3}f (t_2) \equiv \int \mathbf{1}_{\cP_{t_3}} (t_2,t_1) f(t_1) dt_1=\sum_{n=0}^\infty 
\frac{2}{(n+1)\pi} \sin{(n+1)\pi t_3} (f,\vu_n)_{L^2} \vu_n
\]
This yields the ``Mercer expansion'':
\[
\mathbf{1}_{\cP_{t_3}}(t_2,t_1)= \sum_{n=0}^\infty 
\frac{\sqrt{2}}{(n+1)\pi} \sqrt{2} \sin{(n+1)\pi t_3} \ \sqrt{2} \sin{(n+1)\pi t_2} \sqrt{2} \sin{(n+1)\pi t_1}
\]
We will use the following version repeatedly:
\begin{equation}\label{Mercer}
\mathbf{1}_{\cP}(t_3,t_2,t_1)= \sum_{n=0}^\infty 
\frac{\sqrt{2}}{(n+1)\pi} \vu_n(t_1)\vu_n(t_2) \vu_n(t_3)
\end{equation}

The Hilbert-Schmidt norm of $N_{t_3}$ is the $L^2$ norm of its kernel. It is instructive to check this directly:
\[
\begin{split}
\sum_{n=0}^\infty \frac{4}{n^2 \pi^2} \sin^2{(n+1)\pi t_3} &= \sum_{n=0}^\infty \frac{4}{(n+1)^2 \pi^2}(1-\cos^2{(n+1)\pi t_3})\\
&=\frac{4}{\pi^2}\sum_{n=0}^\infty \frac{1}{(n+1)^2}-\frac{4}{\pi^2}\sum_{n=0}^\infty \frac{\cos^2{2(n+1)\pi \frac{t_3}{2}}}{(n+1)^2}\\
&=4\{\frac{1}{6}-B_2(t_3/2)\}\\
&=4\{-(t_3/2)^2+t_3/2\}=2t_3(1-t_3)=area(\cP_{t_3})
\end{split}
\]
\end{remark}

Likewise it is amusing to check that the $L^2$-norm-square of the indicator function $\mathbf{1}_{\cP}$ gives $vol(\cP)$:
\[
\frac{2}{\pi^2}\sum_{n=0}^\infty \frac{1}{(n+1)^2}=\frac{2}{\pi^2}\frac{\pi^2}{6}=\frac{1}{3}
\]

\begin{proposition} Given $u,v \in [0,1]$ the operator $N_uN_v$ is trace class and
\[
M(s,t) \equiv trace(N_sN_t)=
\begin{cases}
2t(1-s)\ \ if \ s \ge t\\
2s(1-t)\ \ if \ t \ge s\\
\end{cases}
\]
\end{proposition}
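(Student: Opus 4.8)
The plan is to diagonalise $N_sN_t$ via the preceding proposition, reduce its trace to an explicit trigonometric series, and evaluate that series using the Fourier expansion of the Bernoulli polynomial $B_2$ recalled in Appendix~A.

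First I would use that $N_s$ and $N_t$ are diagonalised by the \emph{same} orthonormal basis $\{\vu_n\}_{n\ge 0}$, with $N_s\vu_n=\tfrac{2}{(n+1)\pi}\sin((n+1)\pi s)\,\vu_n$ and $N_t\vu_n=\tfrac{2}{(n+1)\pi}\sin((n+1)\pi t)\,\vu_n$. Hence $N_sN_t$ is diagonal in this basis with eigenvalues $\mu_n=\tfrac{4}{(n+1)^2\pi^2}\sin((n+1)\pi s)\sin((n+1)\pi t)$. Each factor is Hilbert--Schmidt (the remark after the preceding proposition identifies its Hilbert--Schmidt norm-square as $\operatorname{area}(\cP_\bullet)\le\tfrac12$), so $N_sN_t$ is trace class, and since $|\mu_n|\le 4/(\pi^2(n+1)^2)$ the series $\sum_{n\ge0}\mu_n$ converges absolutely to $M(s,t)=\operatorname{trace}(N_sN_t)$.

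Next I would rewrite the sum: putting $k=n+1$ and applying $\sin A\sin B=\tfrac12(\cos(A-B)-\cos(A+B))$ gives
\[
M(s,t)=\frac{2}{\pi^2}\sum_{k\ge1}\frac{\cos(k\pi(s-t))-\cos(k\pi(s+t))}{k^2}.
\]
I would then invoke $\sum_{k\ge1}\cos(2\pi kx)/k^2=\pi^2B_2(x)$ for $x\in[0,1]$, with $B_2(x)=x^2-x+\tfrac16$; equivalently $\sum_{k\ge1}\cos(k\pi y)/k^2=\pi^2B_2(y/2)$ for $y\in[0,2]$. Since cosine is even and $|s-t|,\,s+t\in[0,2]$ for $s,t\in[0,1]$, this yields $M(s,t)=2\bigl(B_2(|s-t|/2)-B_2((s+t)/2)\bigr)$. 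Finally the algebraic identity $B_2(a)-B_2(b)=(a-b)(a+b-1)$ closes the argument: for $s\ge t$ take $a=\tfrac{s-t}{2}$, $b=\tfrac{s+t}{2}$, so $a-b=-t$ and $a+b-1=s-1$, giving $M(s,t)=2t(1-s)$; the case $t\ge s$ is symmetric.

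The one step needing genuine care is the Bernoulli evaluation — keeping track of the range of validity of the $B_2$ Fourier series, using the even extension correctly for the $\cos(k\pi(s-t))$ term, and checking the boundary cases $s=t$ and $s=t=1$ directly (here $B_2(0)=B_2(1)=\tfrac16$). As a cross-check, and an alternative proof that avoids Bernoulli numbers altogether, I would observe that the kernel of $N_\bullet$ is the symmetric indicator $\mathbf{1}_{\cP_\bullet}$, so $M(s,t)=\int_{[0,1]^2}\mathbf{1}_{\cP_s}(t_1,t_2)\,\mathbf{1}_{\cP_t}(t_1,t_2)\,dt_1\,dt_2$ equals the area of $\{(t_1,t_2)\in[0,1]^2:\max(s,t)\le t_1+t_2\le 2-\max(s,t),\ |t_1-t_2|\le\min(s,t)\}$; in the coordinates $u=t_1+t_2$, $v=t_1-t_2$ (Jacobian factor $\tfrac12$) this is the rectangle $[\max(s,t),2-\max(s,t)]\times[-\min(s,t),\min(s,t)]$ — the constraints $t_1,t_2\in[0,1]$ being automatic — of area $2\min(s,t)(1-\max(s,t))$, which agrees with both cases of the stated formula.
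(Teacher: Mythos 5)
Your proposal is correct and matches the paper's own treatment, which likewise gives the simultaneous-diagonalisation computation summed via the Bernoulli polynomial $B_2$ (the paper simply assumes $s\ge t$ where you invoke evenness of cosine), alongside the kernel-overlap argument identifying $\operatorname{trace}(N_sN_t)$ with $\mathrm{area}(\cP_s\cap\cP_t)$. The only difference is that you carry out the intersection-area computation explicitly in the coordinates $u=t_1+t_2$, $v=t_1-t_2$, whereas the paper reads it off from a figure.
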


\begin{proof} The trace is given by the integral of the kernel along the diagonal. But $\int_0^1 \int_0^1 \mathbf{1}_{\cP_s} (u,v) \mathbf{1}_{\cP_t}(v,u) du dv$ is the area of the intersection $\cP_s \cap \cP_t$:

\begin{tikzpicture}
	\begin{axis}[ymin=0,ymax=1,enlargelimits=false,xlabel=$t_1$,ylabel=$t_2$]
	\addplot
	table {
		x y 
		0.2 0 
		0 0.2
		0.8 1
		1 0.8	
		0.2 0 	
	};
	 \node at (50,50) {$\cP_{0.2} \cap \cP_{0.3}$};
	 \addplot
	table {
		x y 
		0.3 0 
		0 0.3
		0.7 1
		1 0.7	
		0.3 0 	
	};
	\end{axis}
\end{tikzpicture}

Here is a more computational proof. Since the operators $N_s,N_t$ are simultaneously diagonalised,
\[
\begin{split}
trace(N_sN_t)&=\sum_{n=0}^\infty \frac{4}{(n+1)^2 \pi^2} \sin{(n+1)\pi s} \sin{(n+1)\pi t}\\
&=2\sum_{n=0}^\infty \frac{1}{(n+1)^2 \pi^2} 
\{
\cos{2(n+1)\pi(\frac{s-t}{2})}-\cos{2(n+1)\pi(\frac{s+t}{2})}
\}
\end{split}
\]
We have $0 \le \frac{s+t}{2} \le 1$; further, If $s \ge t$, we also have $0 \le \frac{s-t}{2} \le 1$. So the trigonometric sums can be identified with the Bernoulli polynomials,  and we get
\[
\begin{split}
trace(N_sN_t)&=2\{B_2(\frac{s-t}{2})-B_2(\frac{s+t}{2})\}\\
&=2\big\{\frac{(s-t)^2}{4}-\frac{s-t}{2}-\frac{(s+t)^2}{4}+\frac{s+t}{2}\big\}\\
&=2(t-st)=2t(1-s)
\end{split}
\]
\end{proof}

\begin{corollary} We have
\[
\int dt_3  \vu_{n_3}(t_3) N_{t_3}=\frac{\sqrt{2}}{(n_3+1) \pi} P_{n_3}
\]
where $P_{n}$ is the projection to the span of $\vu_{n}$.
\end{corollary}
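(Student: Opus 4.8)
The plan is to diagonalise $N_{t_3}$ and integrate term by term. By the eigenbasis Proposition established above, for each fixed $t_3 \in [0,1]$ the operator $N_{t_3}$ on $L^2([0,1])$ has the ($t_3$-independent) eigenbasis $\{\vu_n\}_{n \ge 0}$ with eigenvalues $\frac{2}{(n+1)\pi}\sin\{(n+1)\pi t_3\}$; equivalently, in the notation of the Remark following it,
\[
N_{t_3} = \sum_{n=0}^\infty \frac{2}{(n+1)\pi}\sin\{(n+1)\pi t_3\}\, P_n ,
\]
the series converging in Hilbert--Schmidt norm, with $\|N_{t_3}\|_{HS}^2 = \mathrm{area}(\cP_{t_3}) = 2t_3(1-t_3) \le \tfrac12$ uniformly in $t_3$.

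First I would pair the operator-valued integrand against an arbitrary pair of basis vectors. For $j,k \ge 0$,
\[
\Big(\vu_j,\ \big(\textstyle\int_0^1 \vu_{n_3}(t_3)\, N_{t_3}\, dt_3\big)\, \vu_k\Big)
= \delta_{jk} \int_0^1 \vu_{n_3}(t_3)\, \frac{2}{(k+1)\pi}\sin\{(k+1)\pi t_3\}\, dt_3 .
\]
Then I would evaluate the remaining scalar integral using the orthonormality of $\{\vu_n\}$ in $L^2([0,1])$, since $\vu_{n_3}(t_3) = \sqrt{2}\sin\{(n_3+1)\pi t_3\}$:
\[
\int_0^1 \sqrt{2}\sin\{(n_3+1)\pi t_3\}\cdot \frac{2}{(k+1)\pi}\sin\{(k+1)\pi t_3\}\, dt_3
= \frac{2\sqrt{2}}{(k+1)\pi}\cdot \frac12\, \delta_{k,n_3}
= \frac{\sqrt{2}}{(n_3+1)\pi}\,\delta_{k,n_3}.
\]
Hence the matrix elements of $\int_0^1 \vu_{n_3}(t_3)\, N_{t_3}\, dt_3$ coincide with those of $\frac{\sqrt{2}}{(n_3+1)\pi} P_{n_3}$, which gives the claimed identity.

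The only point needing care — and the main (mild) obstacle — is justifying that the operator-valued integral $\int_0^1 \vu_{n_3}(t_3)\, N_{t_3}\, dt_3$ exists and that the pairing and the termwise summation pass through it. This is handled by observing that $t_3 \mapsto N_{t_3}$ is continuous (indeed real-analytic) from $[0,1]$ into the Hilbert--Schmidt operators, with $\|N_{t_3}\|_{HS} \le 1/\sqrt{2}$, so the Bochner integral exists in HS norm; the bounded linear functionals $T \mapsto (\vu_j, T \vu_k)$ and the projections $P_n$ then commute with the integral, and the uniform HS-convergence of the eigen-expansion permits integration term by term. Alternatively, one may avoid operator-valued integrals entirely: for $f,g \in L^2([0,1])$ the scalar integral $\int_0^1 \vu_{n_3}(t_3)\,(g, N_{t_3} f)\, dt_3$ is absolutely convergent because $|(g, N_{t_3} f)| \le \|f\|\,\|g\|/\sqrt{2}$, and the computation above shows it equals $\frac{\sqrt{2}}{(n_3+1)\pi}(g, P_{n_3} f)$, which is exactly the asserted equality of bilinear forms.
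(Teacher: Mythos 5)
Your proposal is correct and follows essentially the same route as the paper: expand in the eigenbasis $\{\vu_n\}$, use the eigenvalue $\frac{2}{(n+1)\pi}\sin\{(n+1)\pi t_3\}$ from the preceding proposition, and let the orthonormality of $\vu_{n_3}$ against $\sin\{(k+1)\pi t_3\}$ produce the Kronecker delta; the paper simply applies the integrated operator directly to $\vu_{n_1}$ rather than taking matrix elements against a pair $\vu_j,\vu_k$. Your added justification of the operator-valued (Bochner) integral via the uniform Hilbert--Schmidt bound is a harmless refinement the paper leaves implicit.
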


\begin{proof} This is a direct computation.
\[
\begin{split}
\{[\int  \vu_{n_3}(t_3) N_{t_3}\ dt_3] \vu_{n_1}\}(t_2)&= \int \vu_{n_3}(t_3) 
\{N_{t_3} \vu_{n_1}\}(t_2) \  dt_3 \\
&=\int   \vu_{n_3}(t_3) 
\frac{2}{(n_1+1)\pi }\sin{(n_1+1) \pi t_3}\ \vu_{n_1}(t_2)\ dt_3 \\
&=\frac{\sqrt{2}}{(n_1+1) \pi } (\vu_{n_3},\vu_{n_1})_{L^2([0,1])} \vu_{n_1}(t_2)\\
&=\delta_{n_3,n_1}\frac{\sqrt{2}}{(n_1+1) \pi } \vu_{n_1}(t_2)
\end{split}
\]
\end{proof}

Let $M$ denote the integral operator with kernel $M(s,t)$.

\begin{proposition}  The (normalised) eigenfunctions  and corresponding eigenvalues  of $M$ are
\[
(\vu_n(t) \equiv \sqrt{2} \sin{(n+1) \pi t},\ \frac{2}{(n+1)^2 \pi^2} ),\ \ \ n=0,1, \dots
\]
\end{proposition}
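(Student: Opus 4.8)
The plan is to read the eigendata straight off the Mercer-type expansion of the kernel that was produced while proving the previous proposition. There, using the simultaneous diagonalisation of the operators $N_s$, we obtained
\[
M(s,t)=trace(N_sN_t)=\sum_{n=0}^\infty \frac{4}{(n+1)^2\pi^2}\sin{(n+1)\pi s}\,\sin{(n+1)\pi t}
=\sum_{n=0}^\infty \frac{2}{(n+1)^2\pi^2}\,\vu_n(s)\vu_n(t),
\]
since $\vu_n(s)\vu_n(t)=2\sin{(n+1)\pi s}\sin{(n+1)\pi t}$. The series converges absolutely and uniformly on $[0,1]^2$, being dominated by $\sum_{n\ge 0}\tfrac{4}{(n+1)^2\pi^2}=\tfrac23$.

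First I would apply $M$ to a basis vector $\vu_m$: uniform convergence licenses term-by-term integration, and orthonormality of $\{\vu_n\}$ in $L^2([0,1])$ gives
\[
(M\vu_m)(s)=\int_0^1 M(s,t)\vu_m(t)\,dt=\sum_{n=0}^\infty\frac{2}{(n+1)^2\pi^2}\vu_n(s)\int_0^1\vu_n(t)\vu_m(t)\,dt=\frac{2}{(m+1)^2\pi^2}\vu_m(s).
\]
Since $\{\vu_n\}_{n\ge 0}$ is a complete orthonormal system for $L^2([0,1])$ and $M$ is a bounded self-adjoint operator (real, symmetric, bounded kernel), this list of eigenpairs is exhaustive; the decay $\tfrac{2}{(n+1)^2\pi^2}$ further shows $M$ is trace class, with $trace(M)=\sum_n\tfrac{2}{(n+1)^2\pi^2}=\tfrac13=vol(\cP)$, a reassuring cross-check against the Hilbert--Schmidt computation above.

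There is essentially no obstacle here: the substantive work was done in the previous proposition, and the only point needing (routine) care is the justification of the term-by-term integration, which the uniform bound supplies. As an alternative, self-contained route that avoids invoking the $N_s$ at all, I would note that $M(s,t)=2G(s,t)$, where $G(s,t)=t(1-s)$ for $t\le s$ and $G(s,t)=s(1-t)$ for $s\le t$ is the Green's function of $-d^2/dt^2$ on $[0,1]$ with Dirichlet boundary conditions; the eigenfunctions of $-d^2/dt^2$ under those boundary conditions are exactly $\sin{(n+1)\pi t}$ with eigenvalues $(n+1)^2\pi^2$, so $G$, and hence $M=2G$, is diagonalised by the $\vu_n$ with eigenvalues $\tfrac1{(n+1)^2\pi^2}$, respectively $\tfrac2{(n+1)^2\pi^2}$. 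Equivalently, the eigenrelation can be verified by the brute-force computation of $\int_0^1 M(s,t)\vu_m(t)\,dt$, splitting the integral at $t=s$ and integrating by parts twice.
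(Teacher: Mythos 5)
Your proposal is correct, and its main route sits in the same circle of ideas as the paper's first proof while executing it differently. The paper verifies the eigenrelation by exchanging the trace with the integration in $t$ and invoking the corollary $\int_0^1 \vu_n(t)\,N_t\,dt=\tfrac{\sqrt{2}}{(n+1)\pi}P_n$, so that $M\vu_n(s)=\tfrac{\sqrt{2}}{(n+1)\pi}\,trace(N_sP_n)=\tfrac{2}{(n+1)^2\pi^2}\vu_n(s)$; you instead take the bilinear expansion $M(s,t)=\sum_n \tfrac{2}{(n+1)^2\pi^2}\vu_n(s)\vu_n(t)$ (already implicit in the paper's computational proof of the trace formula), justify term-by-term integration by the uniform bound, and read off the eigenvalues by orthonormality. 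Both arguments ultimately rest on the simultaneous diagonalisation of the $N_t$, but yours adds two things the paper does not state: the explicit exhaustiveness argument (completeness of $\{\vu_n\}$ plus distinctness of the eigenvalues pins down the whole spectrum, which the proposition's phrasing tacitly assumes), and the trace-class cross-check $trace(M)=\tfrac13=vol(\cP)$. Your Green's function observation --- that $M=2G$ with $G$ the Dirichlet Green's function of $-d^2/dt^2$ on $[0,1]$, so the eigendata are immediate from the Dirichlet spectrum --- is genuinely different from anything in the paper and is arguably the most conceptual route; it explains \emph{why} the sines appear rather than verifying that they do. Your final brute-force suggestion (split the integral at $t=s$) is essentially the paper's second proof, which it carries out via the lemma on $\int_0^s t\sin{xt}\,dt$ rather than by integrating by parts twice.
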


\begin{proof} The first proof uses the definition:
\[
\begin{split}
M \vu_n (s)&=\int dt M(s,t) \vu_n(t)\\
&=\int dt \ trace(N_sN_t)\vu_n(t)\\
&=trace(N_s [\int dt\ \vu_n(t)N_t]))\\
&=\frac{\sqrt{2}}{(n+1) \pi}trace(N_s P_{n})\\
&=\frac{\sqrt{2}}{(n+1) \pi} \frac{2 \sin{(n+1) \pi s}}{(n+1) \pi}\\
&=\frac{2}{(n+1)^2 \pi^2} \vu_n (s)
\end{split}
\]

The second proof uses the expression for the  kernel $M(s,t)$:
\[
M(s,t) \equiv trace(N_sN_t)=
\begin{cases}
2t(1-s)\ \ if \ s \ge t\\
2s(1-t)\ \ if \ t \ge s\\
\end{cases}
\]

\[
\begin{split}
M \vu_n (s)&=\int  M(s,t) \vu_n(t)\ dt\\
&=\sqrt{2}\int_0^s \ 2t(1-s) \sin{(n+1)\pi  t}+\sqrt{2} \int_s^1 dt\  2s(1-t) \sin{(n+1)\pi  t} \ dt\\
&=-2s\sqrt{2}\int_0^1 \ t \sin{(n+1)\pi  t}+\sqrt{2}\int_0^s dt\ 2t \sin{(n+1)\pi  t}\\
&+\sqrt{2} \int_s^1 dt\  2s \sin{(n+1)\pi  t} \ dt\\
&=-2s\sqrt{2}\{-\frac{1}{(n+1) \pi}\cos{(n+1)\pi}\}+2\sqrt{2}\{\frac{1}{(n+1)^2\pi^2}\sin{(n+1)\pi s}\\
&-\frac{s}{n\pi}\cos{(n+1)\pi s}\}-\sqrt{2}\frac{2s}{(n+1)\pi} (\cos{(n+1)\pi }-\cos{(n+1)\pi s})\\
&=\frac{2}{(n+1)^2 \pi^2}\vu_n(s)
\end{split}
\]
\end{proof}

We have used the following:
\begin{lemma}
\[
\int_0^s t \sin{xt} \ dt=\frac{1}{x^2}\sin{xs}-\frac{s}{x}\cos{xs}
\]
\end{lemma}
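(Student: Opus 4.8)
The plan is to prove the identity by a single integration by parts, which is the natural tool since the integrand is a product of the polynomial $t$ and the oscillatory factor $\sin xt$. I would set $u=t$ and $dv=\sin(xt)\,dt$, so that $du=dt$ and $v=-\tfrac1x\cos(xt)$ (here $x\neq 0$, which is the only case in which the right-hand side makes sense). Then
\[
\int_0^s t\sin(xt)\,dt=\Bigl[-\frac{t}{x}\cos(xt)\Bigr]_0^s+\frac1x\int_0^s\cos(xt)\,dt
=-\frac{s}{x}\cos(xs)+\frac1{x^2}\sin(xs),
\]
using $\int_0^s\cos(xt)\,dt=\tfrac1x\sin(xs)$ and the vanishing of the boundary term at $t=0$. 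Rearranging gives exactly the claimed formula.

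As an alternative (and a useful sanity check that I would include or at least verify privately), one can differentiate the right-hand side with respect to $s$:
\[
\frac{d}{ds}\Bigl(\frac1{x^2}\sin(xs)-\frac{s}{x}\cos(xs)\Bigr)=\frac1x\cos(xs)-\frac1x\cos(xs)+s\sin(xs)=s\sin(xs),
\]
and observe that the right-hand side vanishes at $s=0$; since the left-hand side $\int_0^s t\sin(xt)\,dt$ has the same derivative in $s$ and the same value at $s=0$, the two agree for all $s$ by the fundamental theorem of calculus.

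There is no real obstacle here: the statement is an elementary antiderivative computation, and the only thing to be careful about is the sign bookkeeping in the integration by parts (the factor $-\tfrac1x$ in $v$) and the implicit hypothesis $x\neq 0$, which is harmless in all the applications in the paper since there the argument is always $(n+1)\pi$ with $n\ge 0$. Accordingly I would keep the write-up to the two or three displayed lines above.
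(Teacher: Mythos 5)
Your proof is correct, but it takes a different route from the paper. The paper obtains the identity by differentiating under the integral sign with respect to the parameter $x$: it sets $I(x,s)=-\int_0^s\cos(xt)\,dt=-\tfrac1x\sin(xs)$ and observes that $\tfrac{d}{dx}I(x,s)=\int_0^s t\sin(xt)\,dt$, so the answer drops out by differentiating the closed form $-\tfrac1x\sin(xs)$ in $x$. Your integration by parts (with the check by differentiating the right-hand side in $s$) is the more standard textbook computation; it trades the paper's implicit appeal to differentiation under the integral sign (harmless here, since the integrand is smooth in $(x,t)$ on a compact interval) for a small amount of boundary-term bookkeeping. Both arguments are one-liners, both implicitly assume $x\neq 0$, and both give exactly the stated formula, so the difference is purely one of taste.
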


\begin{proof} Set $I(x,s) \equiv -\int_0^s \cos{xt}\ dt=-\frac{1}{x}[\sin{xt}]^s_0=-\frac{1}{x}\sin{xs}$. Then
\[
\begin{split}
\int_0^s t \sin{xt} \ dt=\frac{d}{dx}I(x,s)&=\frac{1}{x^2}\sin{xs}-\frac{s}{x}\cos{xs}
\end{split}
\]
\end{proof}

\section{Volumes, etc.}\label{appendixvolumesetc}

Let $\cM_\sfb$ denote the moduli space of flat connections in genus $\sfb$. In this section, we assume that the symplectic form is normalised so that it represents a generator of the integral second cohomology group $H^2(\cM_\sfb,\bbZ))$. In this case, it is known that 
\[
vol(\cM_\sfb)=\frac{2}{2^{2(\sfb-1)}} vol(\cP_{\cT})
\]
where we can take $\cP_{\cT}$ to be the polytope corresponding to a(ny) trinion decomposition with trinion graph $\cT$.

We compute the volume in two ways. The first mimics the computation using the Verlinde formula, except that we use our continuous analogue. Note first that for any $l \ge 1$
\[
M^l \vu_n=(\frac{2}{(n+1)^2 \pi^2})^l \vu_n
\]
If we express $\Sigma$ as a cyclic union of $\sfb-1$ tori, and take the obvious periodic pants decomposition, the volume of the corresponding polytope is 
\[
\begin{split}
trace(M^{\sfb-1})&=\sum_{n=0}^\infty (\frac{2}{(n+1)^2\pi^2})^{\sfb-1}\\
&= (\frac{2}{\pi^2})^{\sfb-1} \sum_{n=0}^\infty \frac{1}{(n+1)^{2(\sfb-1)}}\\
&=\frac{2^{(\sfb-1)}}{\pi^{2(\sfb-1)}} \zeta((\sfb-1))\\
\end{split}
\]

Using the identity (\ref{Mercer}) easily gives the volume of polytope corresponding to \emph{any} pants decomposition:
\[
\begin{split}
\int_{\cP_\cT}  \mathbf{1} d\bt= \sum_{n=0}^\infty \{\frac{\sqrt{2}}{\pi (n+1)}\}^{2(\sfb-1)}=\frac{2^{(\sfb-1)}}{\pi^{2(\sfb-1)}} \zeta(2(\sfb-1))\\
\end{split}
\]

Either way, we get
\[
vol(\cM_\sfb)=\frac{2}{(2\pi^2)^{(\sfb-1)}} \zeta(2(\sfb-1))
\]

Let us now find the distribution of one elementary Wilson loop as a function of genus $\sfb$. Given any function $f$ of $t$,
\[
\begin{split}
\int_{\cM_\sfb} f(t) d\mu_\cL &= \frac{2}{2^{2\sfb-2}} \int dt dt_1 \dots dt_{\sfb-2}\  f(t) \vu_n(t) \vu_n(t) M(t,t_1) M(t_1,t_2) \dots M(t_{\sfb-2},t)\\
&=2(\frac{2}{\pi^2})^{\sfb-1} \sum_{n=0}^\infty \frac{1}{(n+1)^{2(\sfb-1)}} \int dt\  f(t) \vu_n(t) \vu_n(t)
\end{split}
\]
So the normalised integral has the expression
\[
\begin{split}
\frac{\int_{\cM_\sfb} f(t) d\mu_\cL}{vol(\cM_\sfb)}&=\frac{\int dt\  f(t) \times 2 \sum_{n=0}^\infty \frac{\sin^2{(n+1) \pi  t}}{n^{2(\sfb-1)}}}{ \sum_{n=0}^\infty \frac{1}{(n+1)^{2(\sfb-1)}} }
\end{split}
\]
We rewrite
\[
\begin{split}
\frac{2 \sum_{n=0}^\infty \frac{\sin^2{(n+1) \pi l t}}{n^{2(\sfb-1)}}}{ \sum_{n=0}^\infty \frac{1}{(n+1)^{2(\sfb-1)}} }&=1-\frac{\sum_{n=0}^\infty \frac{\cos{2(n+1) \pi t}}{(n+1)^{2(\sfb-1)}}}{\sum_{n=0}^\infty \frac{1}{(n+1)^{2(\sfb-1)}}}=1-\frac{B_{2\sfb-2}(t)}{B_{2\sfb-2}(0)}
\end{split}
\]
By Corollary 5.2(i) of [Kouba],
$$
\lim_{\sfb \to \infty} (-1)^\sfb \frac{(2\pi)^{2\sfb-2}}{2(2\sfb-2)!} B_{2\sfb-2} (t)= \cos(2\pi t)
$$
uniformly on compacts. This gives
\[
\begin{split}
\lim_{\sfb \to \infty} \frac{\int_{\cM_\sfb} f(t) d\mu_\cL}{vol(\cM_\sfb)}&= \int_0^1 f(t)\ 2\sin^2{\pi t}\ dt=\int_0^1 f(t) \tdt
\end{split}
\]

 \end{document}